\newtheorem{theorem}{Theorem}[section]
\newtheorem{lemma}[theorem]{Lemma}
\title{Multi-porous extension of anisotropic poroelasticity: linkage with micromechanics
}
\author[1, 2]{\normalsize Filip P. Adamus}
\author[1]{\normalsize David Healy}
\author[2]{\normalsize Philip G. Meredith}
\author[2]{\normalsize Thomas M. Mitchell}
\affil[1]{\footnotesize {School of Geosciences, University of Aberdeen, Aberdeen, UK}}
\affil[2]{\footnotesize {Department of Earth Sciences, University College London, London, UK \\ \linebreak
\[adamusfp@gmail.com \quad d.healy@abdn.ac.uk \quad p.meredith@ucl.ac.uk \quad tom.mitchell@ucl.ac.uk\]}}
\date{}
\begin{document}
\maketitle
%%%%%%%%
%%%%%%%%
\begin{abstract}
We attempt to formalise the relationship between the poroelasticity theory and the effective medium theory of micromechanics. 
The assumptions of these two approaches vary, but both can be linked by considering the undrained response of a material; and that is the main focus of the paper. 
To analyse the linkage between poroelasticity and micromechanics, we do not limit ourselves to the original theory of Biot.
Instead, we consider a multi-porous extension of anisotropic poroelasticity, where pore fluid pressure may vary within the bulk medium of interest.
As a consequence, any inhomogeneities in the material are not necessarily interconnected; instead, they may form isolated pore sets that are described by different poroelastic parameters and fluid pressures.
We attempt to incorporate the effective methods inside Biot-like theory and investigate the poroelastic response of various microstructures.
We show the cases where such implementation is valid and the others that appear to be questionable. 
During micromechanical analysis, we derive a particular case of cylindrical transverse isotropy---commonly assumed in conventional laboratory triaxial tests---where the symmetry is induced by sets of aligned cracks.
\end{abstract}

{\bf{Keywords:}} Anisotropy, Micromechanics, Multiple-permeability, Multiple-porosity, Poroelasticity.

%%%%%%%%%%%%%%%%%%%%%%%%%%%%%%%%%%%%%%%%%%%%%%%%%%%%%%%%%%%%%%%%%%%%%%%%%%%%%%%%%%%%
%%%%%%%%
\section{Introduction}
%%%%%%%%
The theory of poroelasticity describes the coupling between the deformation in a solid porous framework (or matrix) and the changes in fluid pressure or content residing in the pores or cracks. The fundamental equations were derived by Biot in a series of papers describing the consolidation of porous materials~\citep{Biot41,Biot56,Biot62}, although the name ``poroelasticity'' was first used by~\citet{Geertsma57}. The formal definition of poroelasticity relies on the assumption of statistically homogeneous continua (i.e., the ergodic hypothesis): in particular, a single connected solid phase comprising the matrix and a single connected pore space containing the fluid. 

Poroelasticity is important because most rocks, especially in the accessible upper crust of the Earth, spend most of their life cycle in the poroelastic regime: i.e., fluid-saturated and stressed, contracting and expanding in response to natural or man-made forces. As we enter the energy transition to deal with the climate emergency, it is imperative that we have a thorough understanding of how poroelasticity works, at scales varying from grains, pores and cracks to whole reservoirs and fault zones. This understanding will help us deliver the sequestration of carbon dioxide, extract geothermal energy and store hydrogen beneath the surface in a safe and cost-efficient way.

Isotropic poroelasticity describes the case where the single connected pore space has no preferred orientation. This has proven useful for describing the deformational response of porous granular rocks, such as sandstones, to changes in load or fluid pressure~\citep[e.g.,][]{Hart95}. However, the assumption of isotropy is inappropriate, when the pore space is made up of aligned cracks---a common feature in fractured rocks, especially around tectonic fault zones, and under conditions of differential rather than hydrostatic loading---then we need the equations describing the anisotropic behaviour~\citep{Cocco02, Lockner02}. Even though the theoretical basis for describing anisotropic poroelasticity is well established, detailed experimental verification at the laboratory scale remains relatively rare, and the isotropic assumption is often misused~\citep[e.g.,][]{Beeler00}.

Another issue hinges on the time and length scales of fluid movement in the pore space, and whether it is useful (or necessary) to consider the pore space as a single connected domain, and whether there is a constant pore fluid pressure throughout. Consider the example shown in Figure~\ref{fig:pic}, a granular cemented sandstone deformed to brittle failure in the laboratory. The pore space (shown in black) can be characterized as (at least) two distinct domains: narrow preferentially aligned cracks and more equant (although irregularly-shaped) pores. The critical questions are: how connected are these domains of void space (porosity clusters); and if they are connected, what is the time scale of fluid flow between them? One method is to consider two end-member possibilities: firstly, that pores and cracks are fully connected, and there is a single pore fluid pressure throughout~\citep{Biot41}; and secondly, that they are not connected at all and that pore fluid pressure varies at the scale of grains and pores and cracks~\citep{Kachanov80}. However, there are scenarios that lie between these end-member possibilities. For example, as depicted in Figure~\ref{fig:scheme}, a medium can consist of distinct porosity clusters (pore sets) that are isolated or weakly connected. In this case, the fluid pressure is constant within a pore set, but may vary between pore sets throughout the material. Such a view underlines the multi-porous (or multiple-porosity) generalisation of poroelasticity that was proposed by e.g.,~\citet{Berryman02},~\citet{Mehrabian14},~\citet{Mehrabian18} for isotropy and hydrostatic confining pressure. Since aligned cracks may play a significant role in fluid pressure variation, anisotropic expressions involving differential stress (instead of hydrostatic pressure) are necessary. In this paper, we consider fundamental expressions of multi-porous extension of anisotropic poroelasticity that describe the deformation of a medium containing different pore and crack sets, each with distinct fluid pressure. Such expressions are derived and discussed in more detail in our parallel paper dedicated solely to the ``extended poroelasticity''~\citep{AdamusEtAl23a}.
\begin{figure}[!htbp]
\centering
\begin{subfigure}{.43\textwidth}
\includegraphics[scale=0.44]{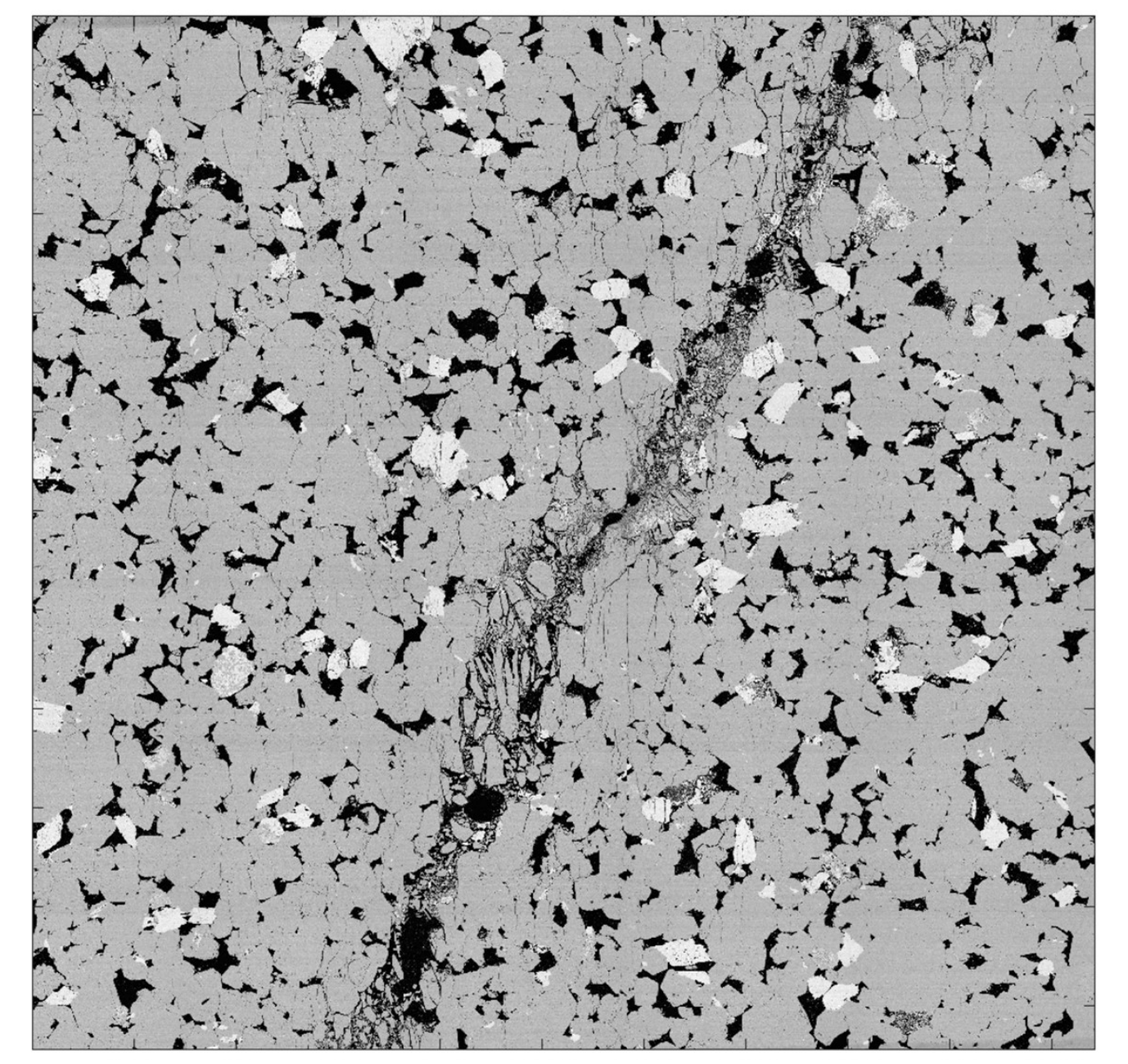}
\caption{}
\label{fig:pic}
\end{subfigure}
\qquad\qquad
\begin{subfigure}{.4\textwidth}
\includegraphics[scale=0.433]{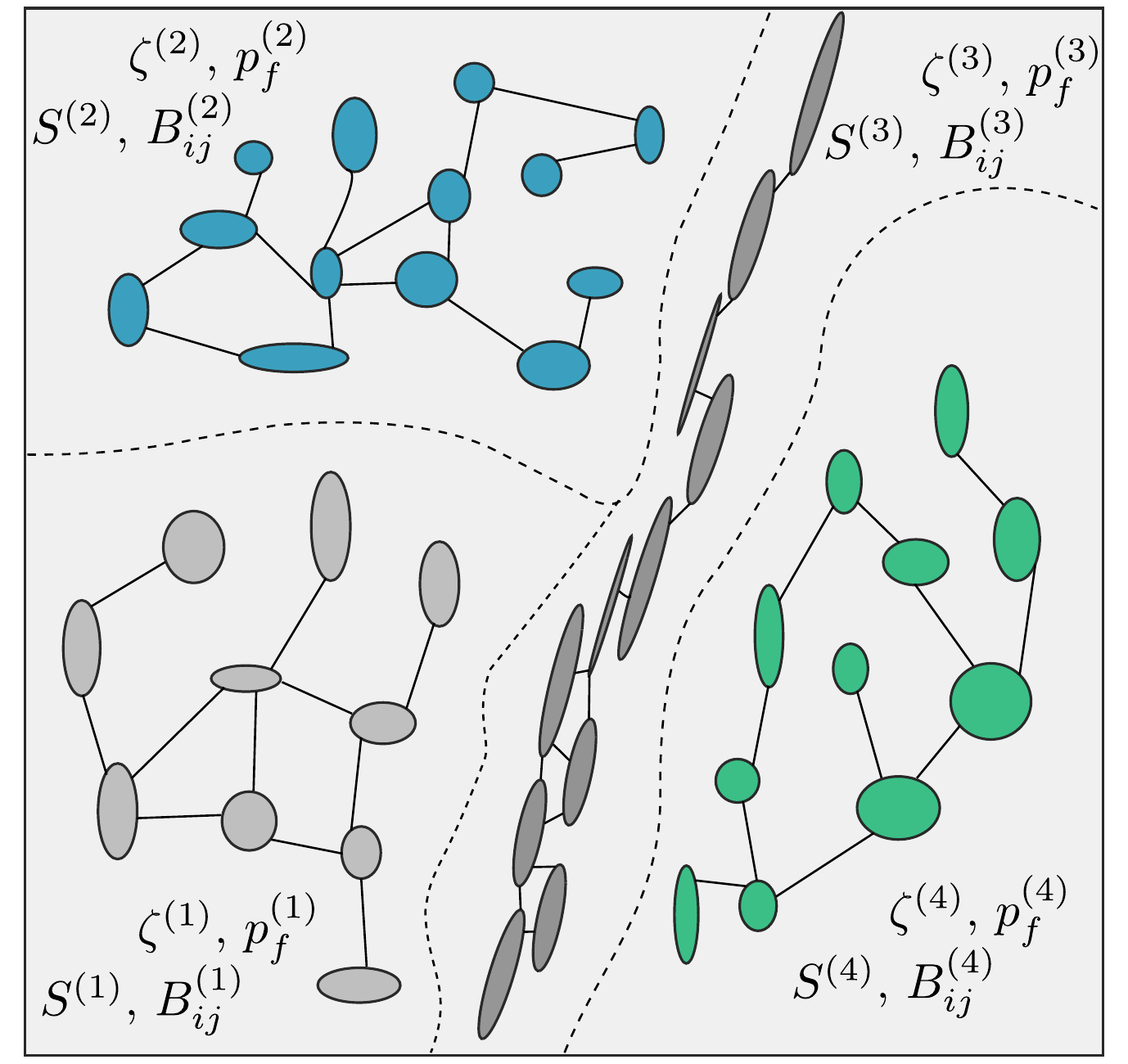}
\caption{}
\label{fig:scheme}
\end{subfigure}
\caption{\footnotesize{\textbf{(a)} SEM-BSE image of a faulted Hopeman sandstone specimen from a laboratory failure test~\citep{Rizzo18}. The field of view is approximately 20 mm across. Black areas are void space, light grey is feldspar, and mid-grey is quartz. The void space can be characterized as two types: thin cracks, mostly oriented parallel to the vertical axis in this image and concentrated around the through-going fault surface; and more or less equant pores distributed in the regions around the fault. \textbf{(b)} Schematic view of the extended poroelasticity. Each pore set is described by distinct fluid content change, pore pressure, storage, and Skempton-like coefficients. Distinct colours correspond to solid matrix and isolated sets.} }
\label{fig:picscheme}
\end{figure}

Further, a common approach in the past has been to rely on approximations from effective elasticity (effective medium theory, EMT) either as a direct modelling approach or in the interpretation of laboratory experimental results~\citep[e.g.,][]{Wong17}. EMT takes a microstructural approach to describe the pores and cracks and their impact on the bulk material properties. However, the formal relationship between poroelasticity (sensu stricto; Biot) and effective elasticity remains unclear. As noticed by~\citet{Gueguen09}, there is an essential difference between the holistic approach of original poroelasticity (connected porosity) and the individual approach of micromechanics (many isolated pores). This corresponds to set-impact and pore-impact descriptions, respectively, seen from the perspective of the extended Biot theory. In other words, a static bulk medium (macroscopic scale), where distinct sets (mesoscopic scale) contain individual pores (microscopic scale), can be viewed at either a set scale or pore scale. The question arises whether (and, if so; when) it is possible to combine the EMT with poroelastic expressions. In the case of many isolated pores (e.g., cracks),~\citet{Shafiro97} define the pore fluid pressure ``polarisation''; the phenomenon that appears in the effective medium approach and corresponds to different pressures in each pore: such a situation cannot formally be considered as poroelastic, according to Biot theory.  
Therefore, in this paper, we analyse the poroelastic extension in view of micromechanics, thereby bridging the gap between the original continuum-based analyses of~\citet{Biot41} and the microstructural, isolated pore and crack models of~\citet{Kach18}. In contrast to~\citet{DormieuxEtAl02}, we assume uniform stress boundary conditions and utilise the concept that excess compliance is the superposition of a dry pore and fluid phase impacts~\citep{Shafiro97}. We consider various microstructures for which the effective approach can be strictly or approximately valid within the poroelastic extension. 

Last but not least, within the micromechanical analysis of extended poroelasticity, we consider a particular case of effective transverse isotropy (TI).
Specifically, we focus on the symmetry that is induced by TI-oriented penny-shaped cracks embedded in the solid matrix~\citep{Sayers95}.
In the context of laboratory experiments, the aforementioned effective case is often referred to as cylindrical transverse isotropy (CTI). 
%due to the induction of axially distributed cracks during conventional triaxial tests performed on the cyllindrical samples. 
Naturally, the TI-oriented cracks can be perceived as vertical (axial) cracks that are random when viewed in the horizontal (radial) plane. 
However, we show that TI-oriented cracks can also be analogous to vertically (axially) aligned cracks forming sets (either connected or isolated) that are equally distributed around the vertical symmetry axis.
Such a novel representation allows us to describe each set geometry and permits the implementation of the extended poroelasticity (various pressures) within CTI. 

Let us introduce some notions that are used throughout the paper repeatedly. 
We define a pore set or a set as connected inhomogeneities that may allow fluid flow.  
Also, we define a pore subset as connected inhomogeneities that make part of a pore set or are equal to a set.
Further, we refer to a pore group as multiple inhomogeneities that are not necessarily connected.
Finally, the notion ``identical pores'' denotes inhomogeneities of the same geometry; thus, having identical shape, orientation, and size.
Due to the large number of symbols used in the paper, a full list can be found in Appendix~\ref{sec:list}.
%%%%%%%%%%%%%%%%%%%%%%%%%%%%%%%%%%%%%%%%%%
\section{Extended poroelasticity in view of micromechanics}
%%%%%%%%%%%%%%%%%%%%%%%%%%%%%%%%%%%%%%%%%%
Consider a medium containing connected pores, where fluid can flow. 
Assume that certain connected inhomogeneities may be isolated from the others forming distinct sets of pores $(p)$. 
To describe the deformation of such a poroelastic medium, we propose equations governing the strains of the entire porous material ($\varepsilon_{ij}$) and the change of fluid content in each set ($\zeta^{(p)}$), respectively. Taking into account the effect of $n$ different sets, we get
\begin{equation}\label{one}
\varepsilon_{ij}=\sum_{k=1}^3\sum_{\ell=1}^3S_{ijk\ell}\sigma_{k\ell}+\frac{1}{3}\sum_{p=1}^nS^{(p)}B^{(p)}_{ij}p_f^{(p)}\,,
\end{equation}
\begin{equation}\label{two}
\zeta^{(p)}=\frac{1}{3}S^{(p)}\sum_{k=1}^3\sum_{\ell=1}^3B^{(p)}_{k\ell}\sigma_{k\ell}+S^{(p)}p_f^{(p)}\,,
\end{equation}
where $S_{ijk\ell}$ denotes compliance of a porous skeleton and $\sigma_{k\ell}$ is the remote, uniform stress applied to the medium. A particular set of pores is described by $S^{(p)}$, $B^{(p)}_{ij}$, and $p_f^{(p)}$ that stand for a storage coefficient, Skempton-like second-rank tensor, and pore pressure, respectively. Throughout the paper, $i,\,j\in\{1,\,2,\,3\}$. Following~\citet{Biot41} convention, pressure has the opposite sign as compared to stress.

Let us discuss the above expressions. 
They are designed to account for various $n$ sets of pores having any microstructure (shape, orientation, and size). 
As mentioned earlier, pores within a particular set are connected to each other; however, different sets are treated as isolated.  
Therefore, fluid cannot flow between such defined sets. 
Each porosity must be considered individually since it produces a particular fluid content change $\zeta^{(p)}$; when summed, giving total fluid content change in the bulk volume, $\zeta_{tot}=\sum_{p=1}^n\zeta^{(p)}$.
As a consequence of isolated sets, fluid pressure is not necessarily constant either.
It may vary if the microstructure of each pore set differs, which is analogous to the pressure polarisation effect~\citep{Shafiro97}.
In turn, varying pressure affects storage and Skempton-like coefficients that need to be calculated for each set separately. 
This fact comes from the definition of the aforementioned parameters, where strict relation to fluid pressure is apparent~\citep{Cheng97}.
Similarly to the change of fluid content, the storage coefficients can also be summed to obtain the total storage of the bulk volume, $S_{tot}=\sum_{p=1}^nS^{(p)}$.
On the other hand, such a summation does not make sense in the context of set pressures or Skempton-like tensors.
One should treat them as poroelastic characteristics of each set, and nothing more.
Using the analogy of a stratified medium, it makes sense to sum the volume fractions of voids or thicknesses of constituents, but adding the elasticity tensors or densities of layers is rather pointless.
The aforementioned properties of pore pressure or storage and Skempton-like coefficients are explained further in Appendix~\ref{sec:ap}.
As expected, in the case of a single set of pores ($n=1$), expressions (\ref{one})--(\ref{two}) reduce to the original Biot theory designed for constant fluid pressure, single fluid content change, one storage coefficient, and one Skempton tensor~\citep{Biot41,Biot62,Cheng97}.

From the perspective of micromechanical linkage with (extended) poroelasticity, expression~(\ref{one}) is critical and needs to be analysed further. In micromechanics, fluid flow is not considered and the strain-stress relation, analogous to~(\ref{one}), is provided only. In other words, the inhomogeneity is treated either as dry or saturated. Therefore, in the context of the linkage between both theories, expression~(\ref{two}) may seem to be redundant.
However, as will become more clear shortly, a specific, undrained ($\zeta^{(p)}=0$) version of expression~(\ref{two}) is necessary for comparison of the theories.
Except for drained (dry) or undrained end-member cases of poroelasticity, other scenarios do not have the analogy to micromechanics.
Therefore, due to no extra value in view of the direct poroelasticity-micromechanics linkage, the analysis of intermediate states or time dependency is beyond the scope of this paper. Note that the time-dependent multi-porous extension of anisotropic poroelasticity is discussed in our parallel article~\citep{AdamusEtAl23a}.

Note that (\ref{one})--(\ref{two}), in contrast to the derivations of e.g.,~\citet{Mehrabian18}, allow material to be anisotropic and do not assume confining pressure. The generalisation to anisotropy becomes crucial when coping with nonrandom pores or aligned cracks---common geological scenarios---that, in turn, may lead to variable pore pressure.
Further, it is important that expression~(\ref{one}) is also relevant to likely scenarios, where pore-sets are not strictly isolated but possess certain weak connections among each other. Yet, these connections are considered to be weak enough so that pressure in each set can vary (e.g, dual porosity in gas reservoir). To account for the set connections and time factor leading to eventual pore pressure equilibration, additional coupling terms in expression~(\ref{two}) should appear~\citep{AdamusEtAl23a}. Although, as noticed by researchers working on an isotropic extension of poroelasticity~\citep{Berryman02,Mehrabian14,Mehrabian18}, these coupling terms are very small and can therefore be neglected. (Also, the low permeability of interconnections implies a very long time required for pressure equilibration).
The introduction of the coupling terms would lead to unwanted complications of the micromechanical analysis; therefore, they are not invoked herein. Nevertheless, our derivations can be treated as approximately valid for the above-mentioned weakly-connected pore sets.

Let us consider two limiting cases that simplify expressions~(\ref{one})--(\ref{two}) and allow us to grasp the physics contained in them.
In the ideal set-drained conditions, where for every set $p_f^{(p)}=0$, expression~(\ref{one}) reduces to
\begin{equation}\label{two:b}
\varepsilon_{ij}=\sum_{k=1}^3\sum_{\ell=1}^3S_{ijk\ell}\sigma_{k\ell}\,.
\end{equation}
Hence, the physical meaning of the above compliance tensor is the following.
It describes the elastic properties of an effective medium containing drained sets of pores. 
Besides, such a medium may also contain some closed spaces either dry or filled with fluid.
Therefore, $S_{ijk\ell}$ denotes compliances of a set-drained, but not necessarily dry, medium.
In other words, pore pressure $p_f^{(p)}$ from expressions~(\ref{one})--(\ref{two}) corresponds to the pores that are able to be drained only.
Pressure in closed pores is implicitly included in the stress tensor from expression~(\ref{two:b}).

In the case of undrained conditions, where for every set, $\zeta^{(p)}=0$, expression~(\ref{two}) reduces to
\begin{equation}\label{three}
p_f^{(p)}=-\frac{1}{3}\sum_{k=1}^3\sum_{\ell=1}^3B^{(p)}_{k\ell}\sigma_{k\ell}\,.
\end{equation}
Upon inserting it inside expression~(\ref{one}), we obtain compliances of the undrained effective medium,
\begin{equation}\label{four}
\varepsilon_{ij}=\sum_{k=1}^3\sum_{\ell=1}^3\left(S_{ijk\ell}-\frac{1}{9}\sum^n_{p=1}S^{(p)}B^{(p)}_{ij}B^{(p)}_{k\ell}\right)\sigma_{k\ell}=\sum_{k=1}^3\sum_{\ell=1}^3S^{u}_{ijk\ell}\sigma_{k\ell}\,.
\end{equation}
Thus, $S^{u}_{ijk\ell}$, stand for undrained compliances. 
The effect of fluids contained in pore sets corresponds to 
\begin{equation}\label{delta}
\Delta_{ijk\ell}:=\sum_{p=1}^n\Delta_{ijk\ell}^{(p)}=-\frac{1}{9}\sum_{p=1}^nS^{(p)}B^{(p)}_{ij}B_{k\ell}^{(p)}\,.
\end{equation}
In Figure~\ref{fig:one}, we depict the impact of tensors $\Delta^{(p)}$ on a set-drained porous medium. 
Therein, we exemplify possible microstructures to which our theoretical extension is pertinent. 
From now on, the notion of ``fluid effect'' refers to tensor $\Delta$ or $\Delta^{(p)}$, depending on the context.
In the next sections, the end-member strain-stress relations (3) and (5) are compared with their micromechanical strain-stress counterparts.
\begin{figure}[!htbp]
\centering
\begin{subfigure}{.4\textwidth}
  \centering
\includegraphics[scale=0.33]{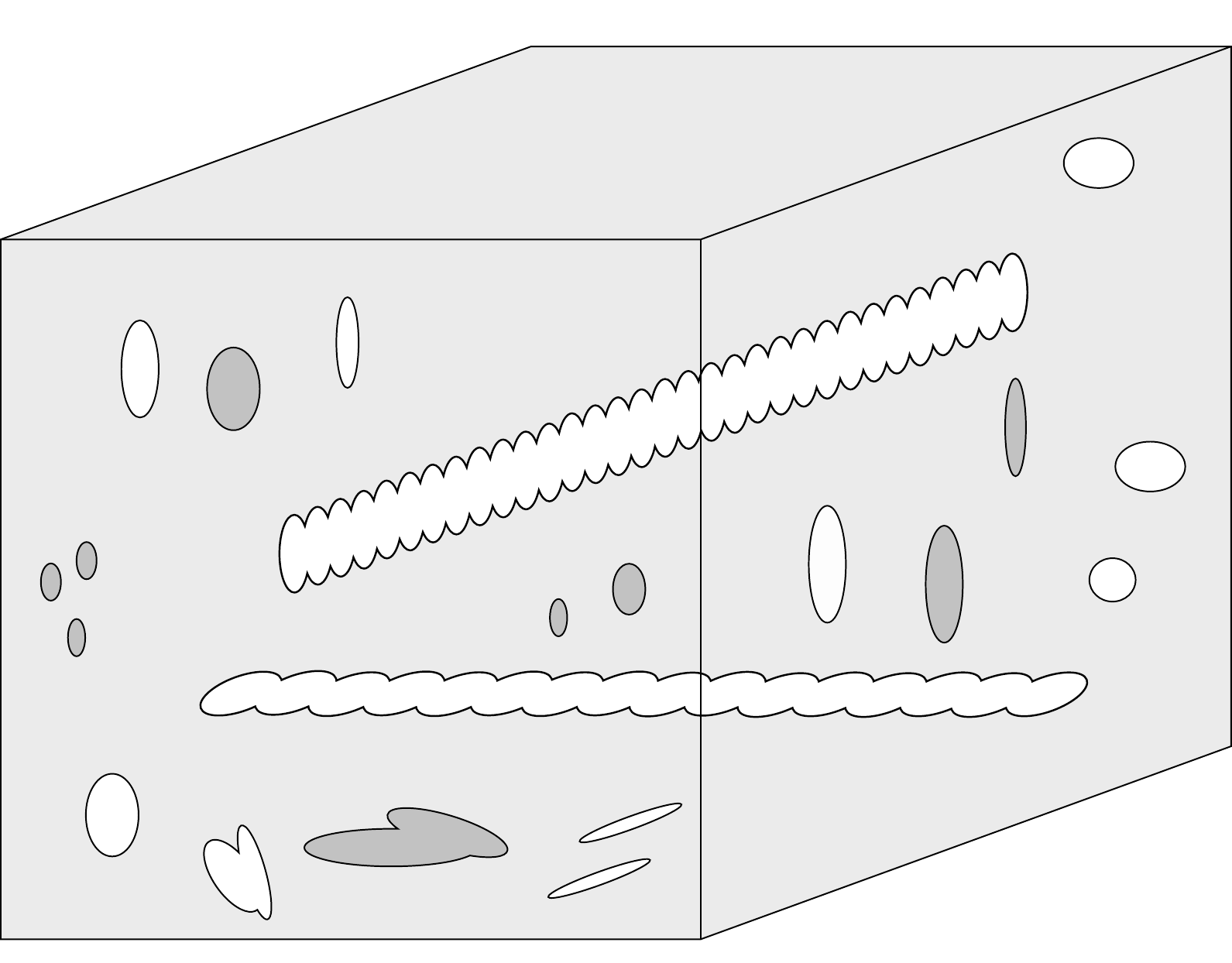}
\caption{\footnotesize{Set-drained medium ($p_f^{(1)}=p_f^{(2)}=0$)}}
\end{subfigure}
\qquad\qquad
\begin{subfigure}{.4\textwidth}
  \centering
\includegraphics[scale=0.33]{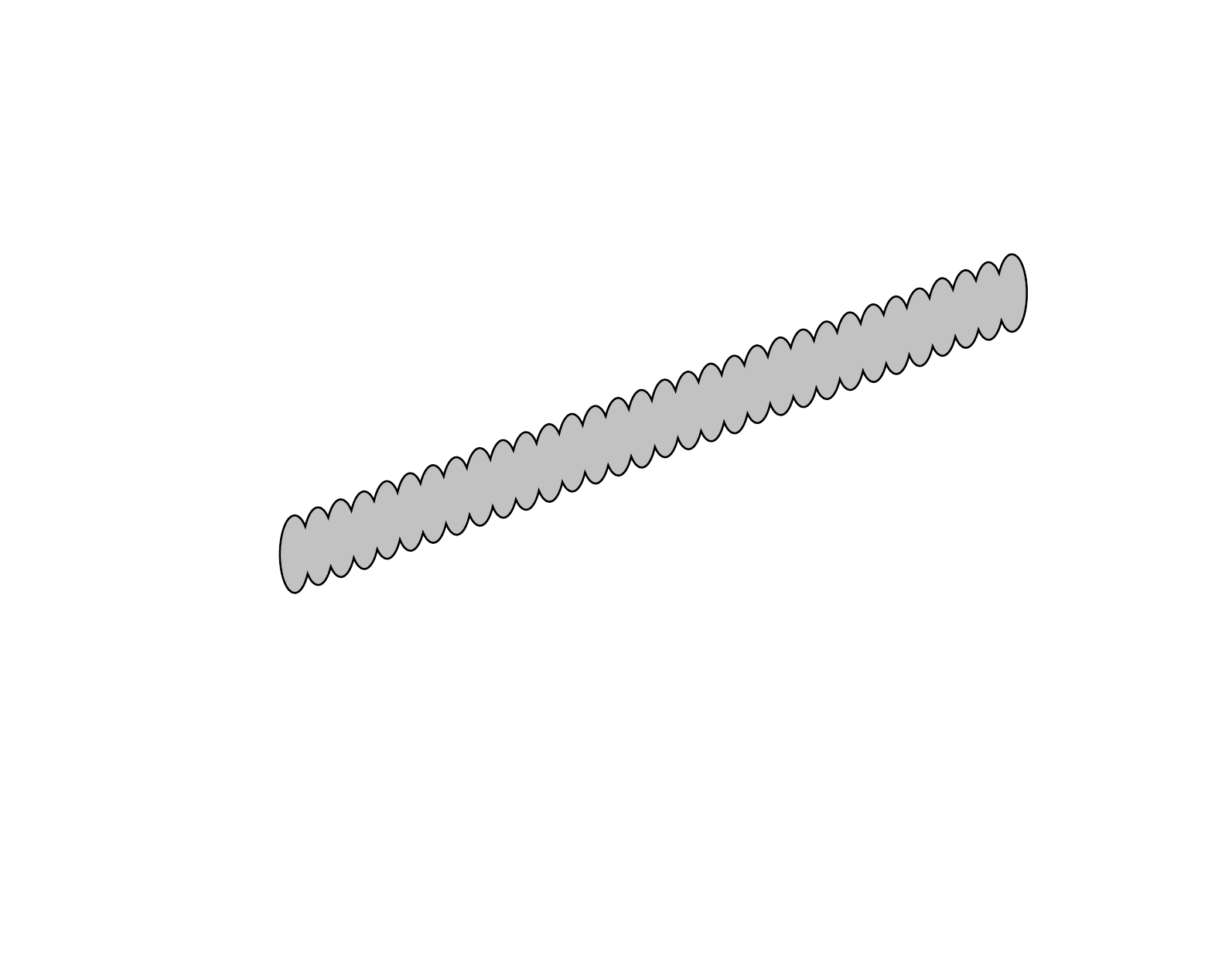}
\caption{\footnotesize{First set ($\Delta^{(1)}$, where $\zeta^{(1)}=0$)}}
\end{subfigure}
\qquad\qquad
\begin{subfigure}{.4\textwidth}
  \centering
\includegraphics[scale=0.33]{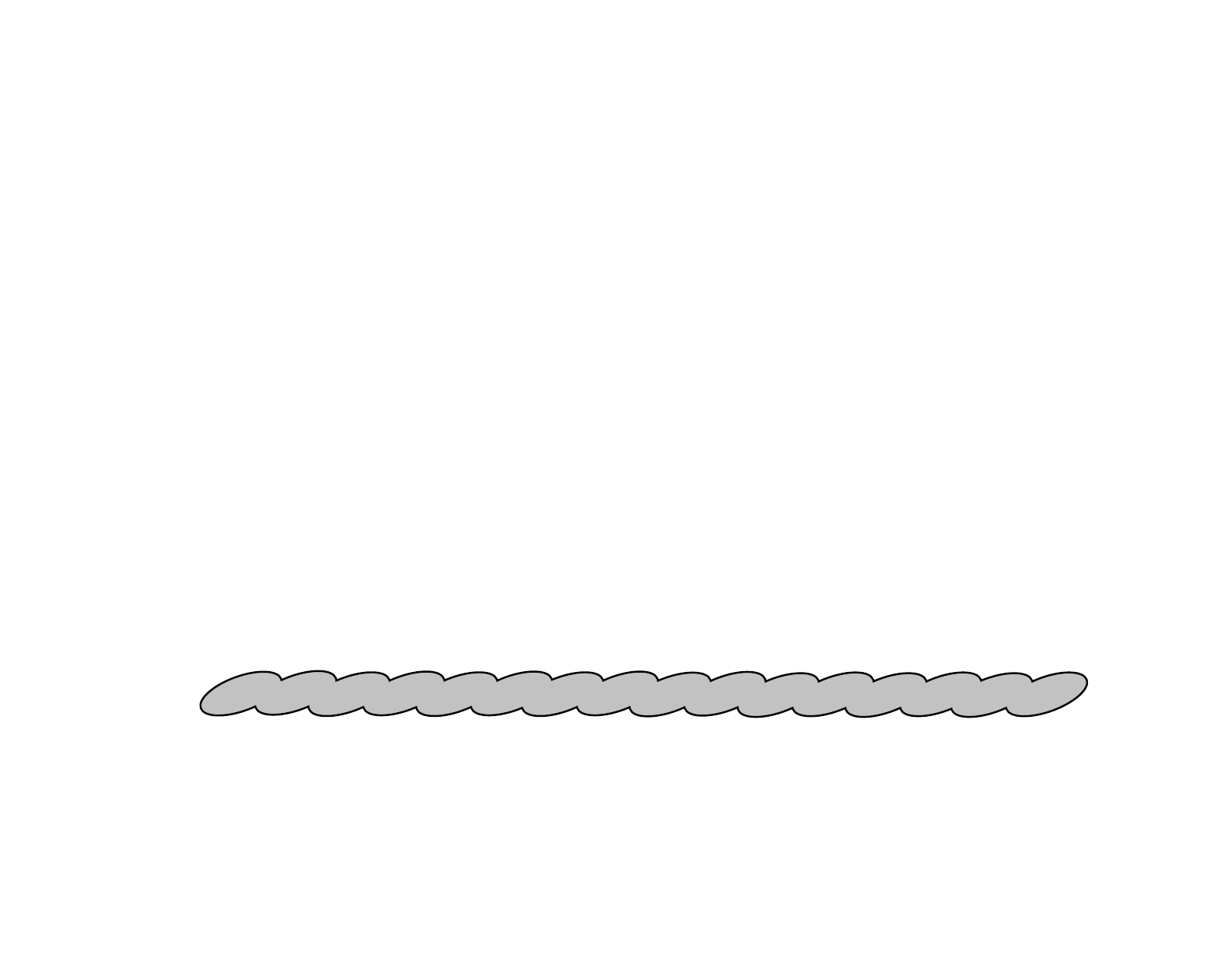}
\caption{\footnotesize{Second set ($\Delta^{(2)}$, where $\zeta^{(2)}=0$)}}
\end{subfigure}
\qquad\qquad
\begin{subfigure}{.4\textwidth}
  \centering
\includegraphics[scale=0.33]{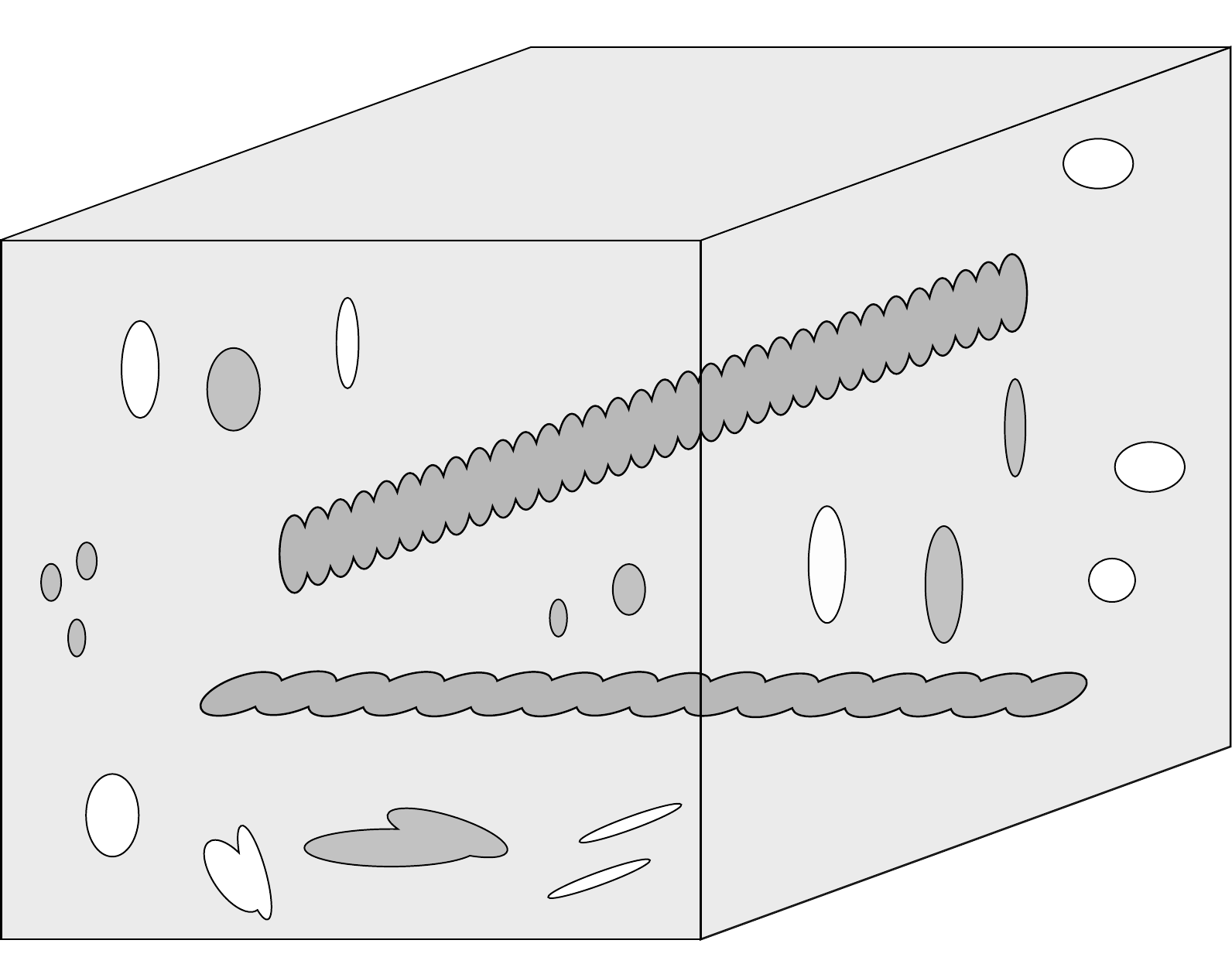}
\caption{\footnotesize{Undrained medium ($\zeta^{(1)}=\zeta^{(2)}=0$)}}
\end{subfigure}
\caption{\footnotesize{Illustration of the theoretical considerations. Light grey stands for a solid, dark grey for fluid, white colour denotes empty space. To consider the influence of fluid on compliances, we insert two sets of undrained pores \textbf{(b)} and \textbf{(c)} inside set-drained skeleton \textbf{(a)}. This way the compliances of a skeleton, $S_{ijk\ell}$, change due to the addition of tensors $\Delta^{(1)}$ and $\Delta^{(2)}$.
As a consequence, undrained compliances, $S_{ijk\ell}^u$, are obtained \textbf{(d)}.
In our schematic drawings, the skeleton contains dry and saturated closed spaces. Also, pores in each set have an identical geometry and overlap slightly. }}
\label{fig:one}
\end{figure}
%
%%%%%%%%%%%%%%%%%%
\subsection{Micromechanical analysis: unspecified microstructure}\label{sec:21}
%%%%%%%%%%%%%%%%%%
Let us perform a micromechanical analysis to get even more insight into expressions~(\ref{one})--(\ref{two}).
We want to translate the poroelastic (storage and Skempton-like) coefficients into compliances.
The micromechanical description of expressions~(\ref{one})--(\ref{two}) may be practical in the context of the reproducibility of laboratory measurements.
First, we consider a single saturated pore embedded in the solid matrix---viewed by EMT as a representative volume element (REV)---being much larger than the pore size. 
Second, we analyse sets of identical pores embedded in the same REV.
Finally, a more general case of pores of different shapes and orientations in sets is discussed.
In this section, we do not specify the microstructure but indicate only whether pores are identical.
%%%%%%%%%%%%%%%%%%
\subsubsection{Single undrained pore}
%%%%%%%%%%%%%%%%%%
Let us refer to the effective method proposed by \citet{Shafiro97} or \citet{Kach18} that was designed for undrained inhomogeneities.
For a single undrained pore, the aforementioned micromechanical researchers propose
\begin{equation}\label{e_ij_kach}
\varepsilon_{ij}=\sum_{k=1}^3\sum_{\ell=1}^3\left[S^{0}_{ijk\ell}+\phi\left(H_{ijk\ell}+\Delta H_{ijk\ell}\right)\right]\sigma_{k\ell}\,,
%\varepsilon_{ij}=\sum_{k=1}^3\sum_{\ell=1}^3\left(S^{0}_{ijk\ell}+\phi H_{ijk\ell}+\phi \Delta H_{ijk\ell}\right)\sigma_{k\ell}\,,
\end{equation}
\begin{equation}\label{polar}
p_f=\sum_{k=1}^3\sum_{\ell=1}^3Q_{k\ell}\sigma_{k\ell}\,,
\end{equation}
where $S^{0}_{ijk\ell}$ are the compliances of the solid phase, $H_{ijk\ell}$ are the excess compliances caused by a dry pore, $\Delta H_{ijk\ell}$ are the excess compliances caused by the fluid in the pore, $\phi$ is the volume fraction occupied by the pore, and $Q_{ij}$ denotes components of the fluid polarisation tensor. 
Naturally, in this case, $\phi$ is also equal to the total volume fraction occupied by all pores, $\phi_{\rm{tot}}$.
Components $\Delta H_{ijk\ell}$---to which we refer loosely as ``saturated compliances''---are expressed in terms of dry excess compliances and the fluid polarisation tensor, namely,
\begin{equation}\label{DeltaH}
\Delta H_{ijk\ell}=\sum^3_{m=1}H_{ijmm}Q_{k\ell}\,,
%\Delta H_{ijk\ell}=\frac{1}{2}\left(Q_{ij}H_{k\ell mn}\delta_{mn}+H_{ijmn}\delta_{mn}Q_{k\ell}\right)\,,
\end{equation}
where 
\begin{equation}\label{q}
Q_{ij}=-\frac{K_d}{1+\delta}\,\sum^3_{m=1}H_{ijmm}\,,
\end{equation}
and
\begin{equation}\label{Kc}
K_d=\left(\sum^3_{m=1}\sum^3_{n=1}H_{mmnn}\right)^{-1}\,
\end{equation}
denotes the bulk modulus of the dry pore.
$\delta$ is a factor introduced by \citet{OConnell74} and generalised by \citet{Shafiro97}, namely,
\begin{equation}\label{delta_small}
\delta=\frac{\frac{1}{K_f}-\frac{1}{K_0}}{\frac{1}{K_d}}\,
\end{equation}
where $K_f$ is the fluid bulk modulus and $K_0$ is the bulk modulus of the solid phase.
Upon inserting expressions~(\ref{q})--(\ref{delta_small}) into (\ref{DeltaH}), we get
\begin{equation}\label{DeltaH2}
\Delta H_{ijk\ell}=-\left(\frac{1}{K_d}+\frac{1}{K_f}-\frac{1}{K_0}\right)\,Q_{ij}Q_{k\ell}\,.
\end{equation}
For the case a dry pore, components $\Delta H_{ijk\ell}$ are equal to zero.
%%%%%%%%%%%%%%%%%%
\subsubsection{Undrained sets with identical pores}
%%%%%%%%%%%%%%%%%%
Let us now consider a group of $m_1$ undrained pores (either connected or isolated) embedded in a solid matrix. 
Assuming negligible interactions between the pores, we utilise expression~(\ref{e_ij_kach}) to get
\begin{equation}\label{multiple_identical}
\varepsilon_{ij}=\sum_{k=1}^3\sum_{\ell=1}^3\left[S^{0}_{ijk\ell}+\sum_{c=1}^{m_1}\phi_c \left(H_{ijk\ell_c}+\Delta H_{ijk\ell_c}\right)\right]\sigma_{k\ell}\,,
\end{equation}
where subscript $c$ is introduced to distinguish each pore in the group.
If the pores are identical, then the volume fraction and the excess compliances are the same for each pore. In such a case, subscript $c$ is no longer needed and~(\ref{multiple_identical}) reduces to
\begin{equation}\label{multiple_identical2}
\varepsilon_{ij}=\sum_{k=1}^3\sum_{\ell=1}^3\left[S^{0}_{ijk\ell}+m_1\phi \left(H_{ijk\ell}+\Delta H_{ijk\ell}\right)\right]\sigma_{k\ell}\,,
\end{equation}
where the total volume fraction of pores, $\phi_{\rm{tot}}=m_1\phi$.
Hence, it does not matter whether each identical pore is treated separately or the entire group is considered as a single inhomogeneity; both approaches are equivalent due to $\sum_{c=1}^{m_1}\phi_c=m_1\phi_c=m_1\phi$ and $H_{ijk\ell_c}=H_{ijk\ell}\implies\Delta H_{ijk\ell_c}=\Delta H_{ijk\ell}$.
We refer to this special case of equivalence in the below text repeatedly.
Note that~(\ref{multiple_identical2}) is essentially the same as~(\ref{e_ij_kach}), only the value of $\phi_{\rm{tot}}$ differs.

In the context of effective methods that use the non-interactive approximation discussed by~\citet{Kach18}, the mechanical response of a medium is not affected by possible connections between pores. 
Each inhomogeneity in a group is regarded individually and described in a manner that does not allow a distinction between connected or isolated pores.
This comes from the fact that~\citet{Kach18} do not consider fluid content changes; with inhomogeneities being treated as undrained. 
However, in the context of the poroelasticity theory, fluid content can change so that a clear distinction between pores that contain fluid that is stuck or allowed to flow is necessary. 

In regards to the paragraph above, we distinguish the compliances of isolated pores from the compliances that account for pores where fluid content varies.
Hence, we divide the inhomogeneities into a group with isolated pores ($m_0$) and a set with connected pores ($m=m_1-m_0$), to obtain
\begin{equation}\label{multiple_identical3}
\varepsilon_{ij}=\sum_{k=1}^3\sum_{\ell=1}^3\left(S_{ijk\ell}+m\phi\Delta H_{ijk\ell}\right)\sigma_{k\ell}\,,
\end{equation}
where 
\begin{equation}\label{Sijkl}
S_{ijk\ell}=S^{0}_{ijk\ell}+m_0\phi \left(H_{ijk\ell}+\Delta H_{ijk\ell}\right)+m\phi H_{ijk\ell}\,.
\end{equation}
If all pores are isolated, then $m=0$ and we obtain expression~(\ref{two:b}).
If all pores are connected, then $m=m_1$.
Note that $m\phi$ corresponds to the volume fraction occupied by a single interconnected pore set, $\phi^{(p)}$.
Similarly, $H_{ijk\ell}=H^{(p)}_{ijk\ell}$ that implies $\Delta H_{ijk\ell}=\Delta H^{(p)}_{ijk\ell}$, $Q_{ij}=Q_{ij}^{(p)}$, and $K_d=K_d^{(p)}$.
Inserting~(\ref{DeltaH2}) into~(\ref{multiple_identical3}), we obtain strains in a medium containing dry or saturated closed pores and a single interconnected set of identical pores, namely,
\begin{equation}\label{multi_identical}
\varepsilon_{ij}=\sum_{k=1}^3\sum_{\ell=1}^3\left[S_{ijk\ell}-\phi^{(p)}\left(\frac{1}{K^{(p)}_d}+\frac{1}{K_f}-\frac{1}{K_0}\right)\,Q^{(p)}_{ij}Q^{(p)}_{k\ell}\right]\sigma_{k\ell}\,.
\end{equation}
Term
\begin{equation}\label{def:storage}
\phi^{(p)}\left(\frac{1}{K^{(p)}_d}+\frac{1}{K_f}-\frac{1}{K_0}\right)=:S^{(p)}
\end{equation}
is a definition of the storage coefficient~\citep{Cheng97}.
Also, comparing expressions~(\ref{three}) and~(\ref{polar}) for $n=1$, we notice that the fluid polarisation tensor is related to the Skempton-like tensor,
\begin{equation}
Q^{(p)}_{ij}=-\frac{1}{3}B^{(p)}_{ij}\,,
\end{equation}
where
\begin{equation}\label{def:skempton}
B_{ij}^{(p)}:=\frac{3\phi^{(p)}}{S^{(p)}}\sum^3_{m=1}H^{(p)}_{ijmm}\,.
\end{equation}
Hence, if $n=1$, expression~(\ref{multi_identical}) is equivalent to (\ref{four}). 
In other words, the fluid effect caused by a set of identical pores,
\begin{equation}\label{equivalence}
\Delta^{(p)}_{ijk\ell}\equiv\phi^{(p)}\Delta H^{(p)}_{ijk\ell}=\sum_{c=1}^{m}\phi_c\Delta H_{ijk\ell_c}\,,
\end{equation}
can be expressed in terms of either poroelastic~(\ref{delta}) or elastic constants~(\ref{equivalence}).
The equivalence (\ref{equivalence}) and definitions~(\ref{def:storage}) and~(\ref{def:skempton}) are also valid for $n>1$ sets.
Considering dry and saturated excess compliances separately for each set, we get
\begin{equation}\label{ident}
\varepsilon_{ij}=\sum_{k=1}^3\sum_{\ell=1}^3\left(S_{ijk\ell}+\sum_{p=1}^n\phi^{(p)}\Delta H_{ijk\ell}^{(p)}\right)\sigma_{k\ell}=\sum_{k=1}^3\sum_{\ell=1}^3\left(S_{ijk\ell}+\sum_{p=1}^n\Delta_{ijk\ell}^{(p)}\right)\sigma_{k\ell}\,,
\end{equation}
where
\begin{equation}\label{ident2}
S_{ijk\ell}=S^{0}_{ijk\ell}+\sum_{c=1}^{n_0}\phi^{(c)} \left(H^{(c)}_{ijk\ell}+\Delta H^{(c)}_{ijk\ell}\right)+\sum_{p=1}^n\phi^{(p)}H^{(p)}_{ijk\ell}\,.
\end{equation}
The above expressions are the multiple-set generalisations of expressions~(\ref{multiple_identical3})--(\ref{Sijkl}), where isolated pores of the same microstructure are denoted by a superscript ($c$).
Herein, we allow $H_{ijk\ell}^{(c)}\neq H_{ijk\ell}^{(p)}$, $\Delta H_{ijk\ell}^{(c)}\neq \Delta H_{ijk\ell}^{(p)}$, and $n_0\neq n$. 
Note that expressions~(\ref{ident})--(\ref{ident2}) correspond to the scenario depicted in Figure~\ref{fig:one} (where $n_0\neq 0$ and $n=2$).
%%%%%%%%%%%%%%%%%%
\subsubsection{Undrained sets with non-identical pores}
%%%%%%%%%%%%%%%%%%
The micromechanical analysis of interconnected pores having different geometries is not straightforward; even if we again assume no interactions between inhomogeneities.
We propose two possible micromechanical descriptions, depicted in Figure~\ref{fig:three}.
One way is to consider each pore separately (at the microscopic scale) and sum the saturated compliances (Figure~\ref{fig:threea}), which is the original method of~\citet{Shafiro97}.
We call it the pore-impact approach.
An alternative conjecture is to treat connected pores as one large inhomogeneity (at the mesoscopic scale) and calculate the saturated compliances once per set only (Figure~\ref{fig:threeb}). 
We call it the set-impact approach.
As shown in the previous section, both methods are equivalent if pores in a set are identical, namely
\begin{equation}\label{twomet}
\sum_{c=1}^{m}\phi_c\Delta H_{ijk\ell_c}=\phi^{(p)}\Delta H^{(p)}_{ijk\ell}\,.
\end{equation}
Since the pore size affects the volume fraction only---whereas saturated compliances remain the same---the equation above also holds if pore sizes vary in a set.
This can be seen if we rewrite the volume fraction of a set as $\sum_{c=1}^m\phi_c=m\overline{\phi_c}=:\phi^{(p)}$, where the bar denotes an average.
Nevertheless, except for the two aforementioned cases, equation~(\ref{twomet}) is not generally obeyed.

There are situations when pore-impact and set-impact approaches predict approximately equal fluid effects.
Such scenarios happen if both product approximation $\overline{\phi_c\Delta H}_{ijk\ell_c}\approx\overline{\phi_c}\,\,\overline{\Delta H_{ijk\ell_c}}$ and relation $\overline{\Delta H_{ijk\ell_c}}\approx\Delta H_{ijk\ell}^{(p)}$ are satisfied, namely,
\begin{equation}\label{twomet4}
\sum_{c=1}^{m}\phi_c\Delta H_{ijk\ell_c}=m\overline{\phi_c\Delta H_{ijk\ell_c}}\approx m\overline{\phi_c}\,\,\overline{\Delta H_{ijk\ell_c}}\approx\phi^{(p)}\Delta H^{(p)}_{ijk\ell}\,.
\end{equation}
The product approximation holds if at least one variable is almost constant~\citep{Backus62} or if both variables are random and independently distributed.
On the other hand, $\overline{\Delta H}_{ijk\ell_c}\approx\Delta H_{ijk\ell}^{(p)}$ holds if $\Delta \bm{H}_c\approx\rm{const}$ that also satisfies the product approximation. Therefore, approximation~(\ref{twomet4}) can hold in the case of slightly varying (from pore to pore) saturated compliances that correspond to almost identical shapes and orientations of the inhomogeneities~\citep{Kach18}.
\begin{figure}[!htbp]
\centering
\begin{subfigure}{.4\textwidth}
  \centering
\includegraphics[scale=0.44]{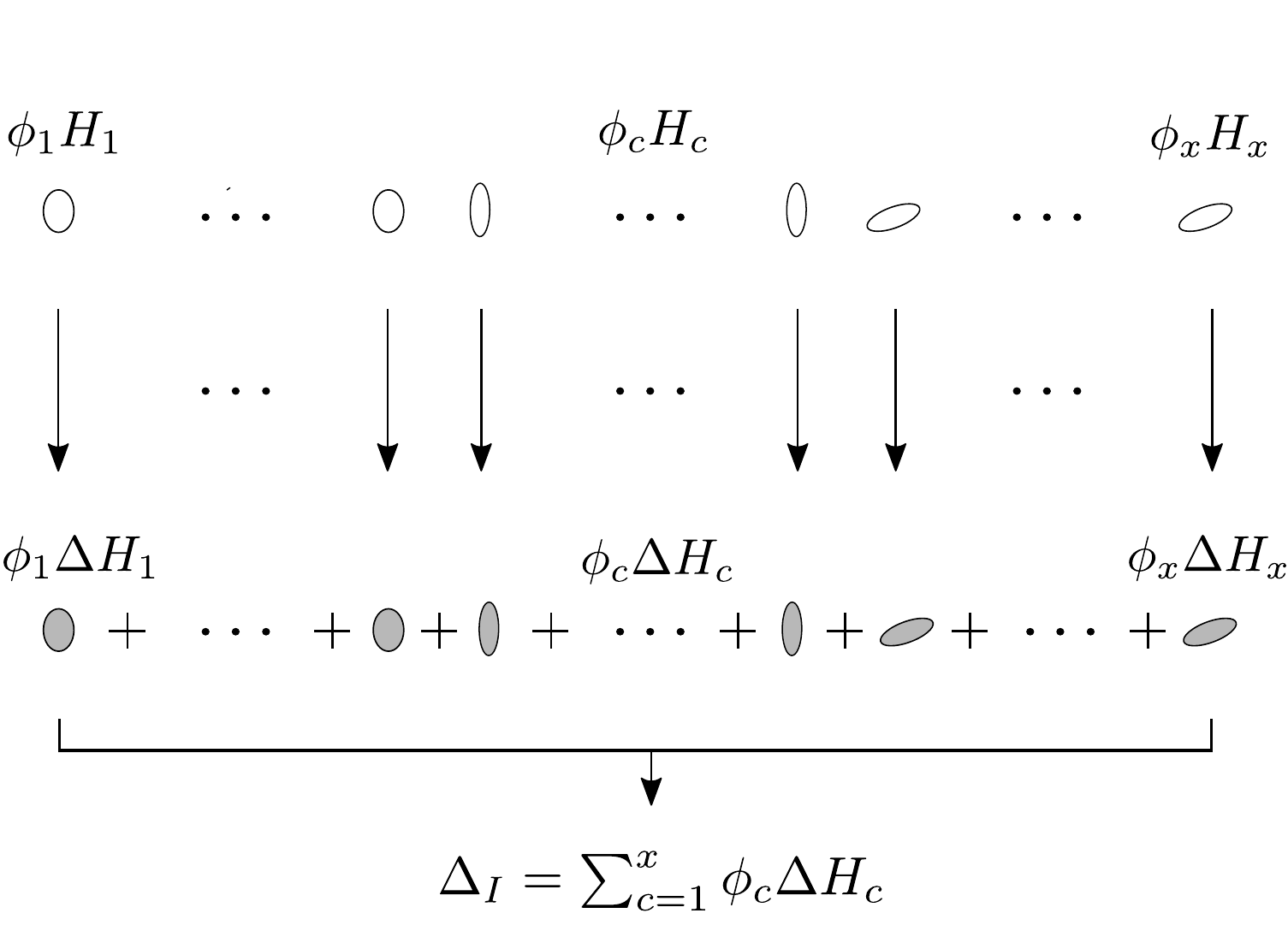}
\caption{\footnotesize{Pore-impact approach $(I)$}}
\label{fig:threea}
\end{subfigure}
\qquad\qquad
\begin{subfigure}{.4\textwidth}
  \centering
\includegraphics[scale=0.44]{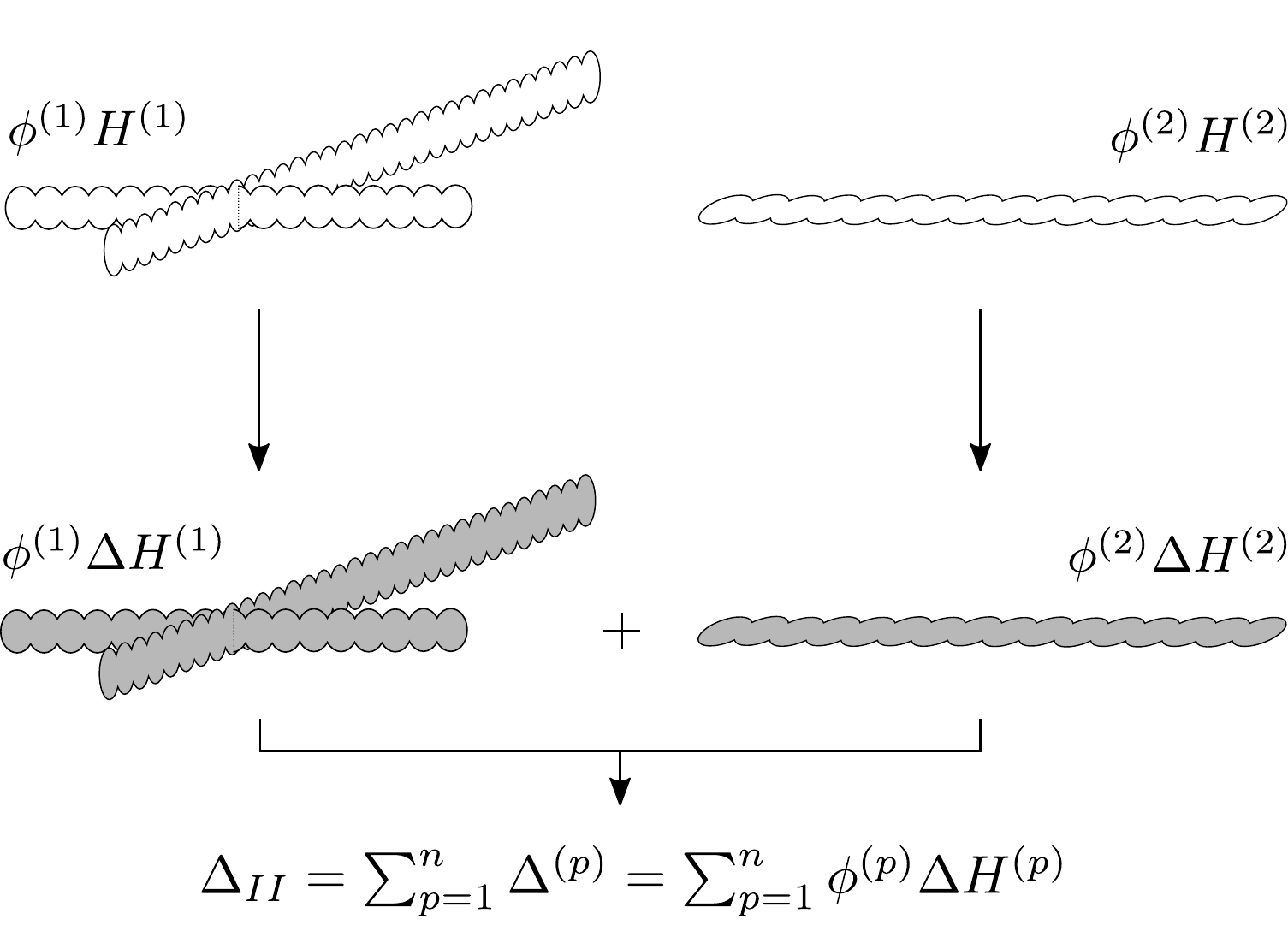}
\caption{\footnotesize{Set-impact approach $(II)$}}
\label{fig:threeb}
\end{subfigure}
\caption{\footnotesize{Illustration of the total fluid effect according to two different approaches, $\Delta_I$ and $\Delta_{II}$. In this example, two sets are embedded in the solid matrix. Grey colour symbolises fluids, whereas white space denotes dry pores. }}
\label{fig:three}
\end{figure}

Let us exemplify and discuss the pore-impact approach.
The effect of fluid is no longer related to the sets of pores so that the poroelastic parameters $S^{(p)}$ and $B_{ij}^{(p)}$ cannot be defined unless pores have identical shapes and orientations.
This can be seen if we compare the extended Biot description with the micromechanical approach, namely,
\begin{equation}
\Delta^{(p)}_{ijk\ell}=\sum_{c=1}^{m}\phi_c\Delta H_{ijk\ell_c}\,,
\end{equation}
which can be rewritten as
\begin{equation}\label{equivalence2}
-\frac{1}{9}S^{(p)}B_{ij}^{(p)}B_{k\ell}^{(p)}= -\phi_1\left(\frac{1}{K_{d_1}}+\frac{1}{K_f}-\frac{1}{K_0}\right)\,Q_{ij_1}Q_{k\ell_1}-\dots-\phi_m\left(\frac{1}{K_{d_{m}}}+\frac{1}{K_f}-\frac{1}{K_0}\right)\,Q_{ij_{m}}Q_{k\ell_{m}}\,.
\end{equation}
Storage and Skempton-like coefficients can be defined in terms of excess compliances only if $\Delta \bm{H}_c=\rm{const}$.
In view of the pore-impact approach, the geometry of the sets does not matter. 
Each pore is treated separately, and their fluid impacts are summed.
Therefore, using the pore-impact approach, one can dismiss consideration of the sets but rather consider a total fluid effect,
\begin{equation}\label{equivalence2b}
\Delta_{ijk\ell}=\sum_{c=1}^x\phi_c\Delta H_{ijk\ell_c}\,,
\end{equation}
where $x$ denotes the total number of pores that may allow fluid flow.
Hence, in general, the pore-impact description is inconvenient from the perspective of the extended poroelasticity, where sets---and their poroelastic parameters---are essential.
Further, according to expression~(\ref{equivalence2}), each pore in a set---due to various geometries ---is described by a different fluid polarisation tensor $Q_{ij_c}$ that implies different fluid pressure.
This is in contradiction to an extended Biot view that assumes constant pressure in the set that naturally results in constant pressure in each interconnected pore.
Another, more practical, downside of this description is that each pore needs to be considered separately, which is difficult and time-consuming to measure in the laboratory and is impossible to measure in the field.
Nevertheless, the pore-impact approach is in line with the effective methodology initiated by~\citet{Eshelby57}, continued by~\citet{OConnell77}, and standardised by~\citet{Shafiro97}, where saturation of each pore is considered separately. This approach can be used successfully to calculate the total fluid effect of the poroelastic medium.

To utilise the latter method, which we refer to as a ``set-impact'' approach, we should treat a pore set as one mesoscopic inhomogeneity.
To do so, first, we need to express $\sum_{c=1}^m\phi_cH_{ijk\ell_c}$ by a single volume fraction and one excess compliance tensor.
Hence, we utilise
\begin{equation}\label{def:Hp0}
\phi^{(p)}{{H}}_{ijk\ell}^{(p)}:=\sum_{c=1}^m\phi_cH_{ijk\ell_c}\,,
\end{equation}
where $\phi^{(p)}$ is the volume fraction of all pores in the set.
Note that we can define
\begin{equation}\label{def:Hp}
 H_{ijk\ell}^{(p)}:=\frac{\sum_{c=1}^m\phi_cH_{ijk\ell_c}}{\sum_{c=1}^m\phi_c}=\overline{H_{ijk\ell_c}}\,.
\end{equation}
Herein, the bar denotes the average weighted by volume fractions.
Second, we insert $H_{ijk\ell}^{(p)}$ inside~(\ref{DeltaH})--(\ref{Kc}) to get $\Delta H_{ijk\ell}^{(p)}$.
This way, the fluid effect in a set is obtained,
\begin{equation}\label{equivalence3}
\Delta^{(p)}_{ijk\ell}=\phi^{(p)}\Delta H_{ijk\ell}^{(p)}
\end{equation}
that can be rewritten as
\begin{equation}
-\frac{1}{9}S^{(p)}B_{ij}^{(p)}B_{k\ell}^{(p)}=-\phi^{(p)}\left(\frac{1}{K_d^{(p)}}+\frac{1}{K_f}-\frac{1}{K_0}\right)\,Q^{(p)}_{ij}Q^{(p)}_{k\ell}\,.
\end{equation}
It is clear that such a description is fully compatible with the extended poroelastic description. 
There is only one fluid polarisation tensor per set, which implies constant pressure, as expected.
Expressions~(\ref{multi_identical})--(\ref{def:skempton}) remain valid, also for various geometries.
We note that the treatment of the set as one mesoscopic inhomogeneity may be regarded as inconsistent from the perspective of the effective medium theory.
Therefore, at this stage, despite its obvious advantages, the set-impact approach should be treated as a conjecture only.

Let us review the achievements of Section~\ref{sec:21}.
First, we have pointed out a micromechanical analysis for the case of a single isolated pore.
We did not indicate the relationship between poroelastic Biot-like constants yet since we assumed that a single pore cannot allow fluid flow.
Instead, we used the derived expressions as a basis for the multi-pore scenarios.
We have shown that the effect of fluids present in the interconnected identical pores is equivalent to the impact of saturated compliances obtained from effective methods~(\ref{equivalence}). 
Poroelastic coefficients can be defined in terms of these compliances~(\ref{def:storage})--(\ref{def:skempton}). 
Similarly, the impact of fluids may be considered in the sets of pores having various geometries. 
Depending on the methodology involved, the fluid effect is either~(\ref{equivalence2b}) or~(\ref{equivalence3}); both approaches can be approximately equal in specific situations. 
Knowing the fluid impact, the strains of an undrained medium~(\ref{four}) can be described by both approaches.
Both effective methods can be utilised not only to take into account the strains caused by undrained connected pores but to consider isolated pores that form the skeleton, $S_{ijk\ell}$, as well. 
Nevertheless, there exists a significant drawback to one approach.
If the pore-impact method is utilised, then poroelastic parameters cannot be defined for a set generally. 
On the other hand, if we use the set-impact approach, then definitions~(\ref{def:storage})--(\ref{def:skempton}) are valid.
Therefore, in general, the fundamental equations~(\ref{one})--(\ref{two}) might be described using the latter micromechanical approach only.

Two essential related questions remain.
First, which micromechanical description of a fluid effect is more accurate, and when? 
Second, can we utilise the set-impact conjecture to describe the strains of partially-saturated medium and the resulting change of fluid content? 
To address these questions, we need to perform detailed laboratory experiments and compare the results with the theoretical predictions. 
However, this is beyond the scope of the current paper.
Herein, we describe both methodologies on specified microstructure and simulate numerically the fluid-effect discrepancies between the methods; which gives us indicative answers only.
%%%%%%%%%%%%%%%%%%%%%%%%%%%%%%%%%%%%%%%%%%
\subsection{Micromechanical analysis: specified microstructure}
%%%%%%%%%%%%%%%%%%%%%%%%%%%%%%%%%%%%%%%%%%
In the previous section, we have related the extended poroelasticity with micromechanics for unspecified excess compliances. 
These excess compliances can be obtained for various geometries such as cracks, spheres, needles, and others, which can be classified as ellipsoids~\citep{Kach18}.
Herein, we describe a few microstructures that are commonly considered in geophysics and tectonics and are interesting in the context of extended poroelasticity.
%%%%%%%%%%%%%%%%%%%%%%%%%%%%
\subsubsection{Penny-shaped cracks: any orientations}
%%%%%%%%%%%%%%%%%%%%%%%%%%%%
Let us consider a medium with one set of penny-shaped cracks having different orientations.
In such a case, we utilise
\begin{equation}\label{crack}
\sum_{c=1}^m\phi_cH_{ijk\ell_c}=\frac{1}{4}\left(\delta_{ik}\alpha_{jl}+\delta_{i\ell}\alpha_{jk}+\delta_{jk}\alpha_{i\ell}+\delta_{j\ell}\alpha_{ik}\right)+\beta_{ijk\ell}\,,
\end{equation}
where $\delta_{ij}$ is the Kronecker delta and
\begin{align}\label{alphabeta}
\alpha_{ij}&:=\sum_{c=1}^mZ_{T_c}n_{i_c}n_{j_c}\,,\\
\label{alphabetab}
\beta_{ijk\ell}&:=\sum_{c=1}^m\left(Z_{N_c}-Z_{T_c}\right)n_{i_c}n_{j_c}n_{k_c}n_{\ell_c}
\end{align}
stand for crack density tensors~\citep{Kachanov80}, where $n_i$ is the normal to the crack surface. 
The above tensors contain
\begin{equation}
Z_{T_c}:=\frac{32e_c}{3 E}\frac{\left(1-\nu^2\right)}{\left(2-\nu\right)}\,,\qquad
Z_{N_c}:=Z_{T_c}\left(1-\frac{\nu}{2}\right)\,,
\end{equation}
with
\begin{equation}
e_c=\frac{a_c^3}{V}\,,
\end{equation} 
where $V$ is the medium's volume, $a_c$ is the crack radius, $E$ and $\nu$ denote Young modulus and Poisson ratio of a solid phase, respectively.
If we consider the pore-impact approach, the fluid effect is
\begin{equation}
\sum_{c=1}^m\phi_c\Delta H_{ijk\ell_c}=-\sum_{c=1}^m\left(\frac{1}{1+\delta_c}\right)Z_{N_c}n_{i_c}n_{j_c}n_{k_c}n_{\ell_c}
\,,
\end{equation} 
where
\begin{equation}
\delta_c=K_{d_c}\left(\frac{1}{K_f}-\frac{1}{K_0}\right)=\frac{\phi_c}{Z_{N_c}}\left(\frac{1}{K_f}-\frac{1}{K_0}\right)=\frac{\gamma_c\pi E}{4\left(1-\nu^2\right)}\left(\frac{1}{K_f}-\frac{1}{K_0}\right)
\end{equation}
depends on the aspect ratio of each crack, $\gamma_c$.
If we consider the set-impact approach, we obtain $H^{(p)}_{ijk\ell}$ by dividing the right-side of expression~(\ref{crack}) by
\begin{equation}
\phi^{(p)}=\frac{4\pi}{3V}\sum_{c=1}^ma_c^3\gamma_c\,.
\end{equation}
Then, as prescribed earlier, we use $H^{(p)}_{ijk\ell}$ inside~(\ref{DeltaH})--(\ref{Kc}) to get $\Delta H_{ijk\ell}^{(p)}$ that enters~(\ref{equivalence3}).
%%%%%%%%%%%%%%%%%%%%%%%%%%%%
\subsubsection{Penny-shaped cracks: orthogonal orientations}
%%%%%%%%%%%%%%%%%%%%%%%%%%%%
Let us consider a simple example of perpendicular cracks.
Think of one interconnected set of cracks having identical shapes and sizes; meaning that $\gamma_c, a_c=\rm{const}$ that implies $Z_{N_c},Z_{T_c},\delta_c=\rm{const}$.
Assume that $m_1$ cracks have surface normals oriented towards the $x_1$-axis, $m_2$ towards the $x_2$-axis, and $m_3$ towards the $x_3$-axis, where the total number $m=m_1+m_2+m_3$.
This way, we can define $m_1Z_{N_c}=Z_{N1}$, $m_2Z_{N_c}=Z_{N2}$, $m_3Z_{N_c}=Z_{N3}$.
The fluid effect according to the pore-impact $(I)$ description is (we show the non-zero $3\times3$ minor only),
\begin{equation}
\Delta_{I}=
-\frac{1}{S_c}\left[
\begin{array}{ccc}
\frac{1}{m_1}Z^2_{N1} & 0 & 0 \\
0& \frac{1}{m_2}Z^2_{N2} & 0 \\
0 & 0 & \frac{1}{m_3}Z^2_{N3} \\
\end{array}
\right]
=
-\frac{1}{1+\delta}\left[
\begin{array}{ccc}
Z_{N1} & 0 & 0 \\
0& Z_{N2} & 0 \\
0 & 0 & Z_{N3} \\
\end{array}
\right]\,,
\end{equation}
where 
\begin{equation}
S_c=\phi_c\left(\frac{1}{K_{d_c}}+\frac{1}{K_f}-\frac{1}{K_0}\right)\,,\qquad \frac{1}{K_{d_c}}=\frac{Z_{N_c}}{\phi_c}\,.
\end{equation}
The pore-impact approach disregards the connections between cracks, hence, the storage-like coefficient $S_c$ is obtained for each pore. 
%and the non-zero Skempton-like coefficient might be described as $-Z_{N_c}/3S_c$. 
Note that in each subset, $m_i$, cracks are identical, which is an exceptional situation.
Therefore, instead of treating each crack individually, we can rewrite the above expressions in terms of poroelastic constants that correspond to three subsets (isolated or not!) of identical cracks. 
In other words,
\begin{equation}
\Delta_I=-\frac{1}{9S^{(1)}}
\left[
\begin{array}{ccc}
\left(S^{(1)}B_{11}^{(1)}\right)^2 & 0 & 0 \\
0&0&0 \\
0 & 0& 0 \\
\end{array}
\right]\,
-
\frac{1}{9S^{(2)}}
\left[
\begin{array}{ccc}
0 & 0 & 0 \\
0&\left(S^{(2)}B_{22}^{(2)}\right)^2&0 \\
0 & 0& 0 \\
\end{array}
\right]\,
-
\frac{1}{9S^{(3)}}
\left[
\begin{array}{ccc}
0 & 0 & 0 \\
0&0&0 \\
0 & 0& \left(S^{(3)}B_{33}^{(3)}\right)^2 \\
\end{array}
\right]\,,
\end{equation}
where
\begin{equation}
S^{(i)}=m_i\phi_c\left(\frac{1}{K_{d_c}}+\frac{1}{K_f}-\frac{1}{K_0}\right)\,, \qquad B^{(i)}_{ii}=\frac{3Z_{Ni}}{S^{(i)}}\,.
\end{equation}

On the other hand, the set-impact $(II)$ approach indicates
\begin{equation}
\Delta_{II}=
-\frac{1}{S}
\left[
\begin{array}{ccc}
Z_{N1}^2 & Z_{N1}Z_{N2} & Z_{N1}Z_{N3} \\
Z_{N1}Z_{N2}& Z_{N2}^2 & Z_{N2}Z_{N3} \\
Z_{N1}Z_{N3} & Z_{N2}Z_{N3} & Z_{N3}^2 \\
\end{array}
\right]\,
=
-\frac{1}{9S}
\left[
\begin{array}{ccc}
S^2B_{11}^2 & S^2B_{11}B_{22} & S^2B_{11}B_{33} \\
S^2B_{11}B_{22}& S^2B_{22}^2 & S^2B_{22}B_{33} \\
S^2B_{11}B_{33} & S^2B_{22}B_{33} & S^2B_{33}^2 \\
\end{array}
\right]\,,
\end{equation}
where
\begin{equation}
S=m\phi_c\left(\frac{1}{K_d^{(p)}}+\frac{1}{K_f}-\frac{1}{K_0}\right)\,, \qquad \frac{1}{K_d^{(p)}} 
%\frac{Z_{N1}+Z_{N2}+Z_{N3}}{m\phi_c}
=\frac{1}{K_{d_c}}\,,
\qquad
B_{ii}=\frac{3Z_{Ni}}{S}\,.
\end{equation}
We note that the set-impact approach leads to one storage coefficient and one Skempton tensor only; cracks are described as connected.
In our example, the connections between each $m_i$ subset are expressed as the non-zero off-diagonal terms of $\Delta_{II}$.
The aforementioned terms are absent in the former methodology, where cracks are treated separately.
%If $m_i=m/3$, then $Z_{N1}=Z_{N2}=Z_{N3}$ that implies $S^{(1)}=S^{(2)}=S^{(3)}$ and $B^{(1)}=B^{(2)}=B^{(3)}$.
%Also, the connections between {\blue{subsets}} are nicely exposed in the pore-set bulk modulus $K_c^{(p)}$ that accounts for each $Z_{Ni}$ and makes part of the pore-set storage coefficient $S$.
Importantly, if each subset constitutes a detached set, then $\Delta_I$ remains the same (connections do not matter) but $\Delta_{II}$ reduces to $\Delta_I$ (due to sets with identical cracks).
A significant influence of connections between subsets on the set-impact ($II$) description is clear.
%%%%%%%%%%%%%%%%%%%%%%%%%%%%%%%%%%%%%%%%%%
\subsubsection{Penny-shaped cracks: TI orientations}\label{sec:CTI}
%%%%%%%%%%%%%%%%%%%%%%%%%%%%%%%%%%%%%%%%%%
As mentioned in the Introduction, the effective transverse isotropy can be obtained by distributing TI-oriented penny-shaped cracks in the isotropic solid phase~\citep{Sayers95}.
Following the rock physics nomenclature, we refer to such a particular case briefly as cylindrical transverse isotropy (CTI). 
Commonly, the microstructure that leads to CTI is described as vertical cracks that are not aligned but are randomly distributed around the symmetry axis. 
Its excess compliances are expressed by a general formula for cracks~(\ref{crack}), where additionally $\alpha_{11}=\alpha_{22}$, $\beta_{1111}=\beta_{2222}$, and $\beta_{1122}=\beta_{1111}/3$. 
Also, note that $\alpha_{ij}$ and $\beta_{ijk\ell}$ are symmetric with respect to all rearrangements of indices.

Due to the random orientation of cracks viewed in horizontal plane, the link between the geometry of particular pores and crack density tensors is unclear. 
In other words, it is not easy to obtain values of crack density tensors---that satisfy the CTI relations---by inserting each particular crack radius and surface normals.
Therefore, the microstructure that induces CTI is treated holistically---as one large group of vertical cracks, where the orientation of a specific crack is unknown.  
Optionally, the horizontal cracking can be considered---coefficients $\alpha_{33}$ and $\beta_{3333}$ increase, whereas the CTI relations remain obeyed.  
If certain cracks are connected, then they allow fluid flow, and the poroelasticity theory can be used.
In the case that all cracks are interconnected (single set), the original Biot theory is applicable (one Skempton tensor and single storage coefficient).
The situation of an isolated set of horizontal inhomogeneities (two sets in total) can be furnished by poroelastic extension (two Skempton-like tensors and two storage coefficients). 
Both scenarios are depicted in Figures~\ref{fig:foura}--\ref{fig:fourb}.

What if the medium is CTI but pore pressure in vertical cracks is not constant? In other words, what if---although TI oriented---the vertical pores are forming a few or dozens of isolated sets?  
To furnish such a situation, we propose to specify the geometry of each pore and use these geometries to recreate the crack density tensors that satisfy CTI relations.
To do so, we assume $n$ sets of aligned vertical cracks that are embedded in the isotropic solid phase and are equally distributed around the symmetry axis. 
In other words, each vertical set is isolated from the other by an angle equal to $\theta=\pi/n$ radians.
According to Herman's theory~\citep{Herman45}, a fourth-rank tensor is transversely isotropic when it is invariant to at least five-fold rotation about the symmetry axis.
Thus, the smallest number of symmetry planes distributed equally around the symmetry axis that would satisfy Herman's requirement is three.  
By analogy, if a pore set corresponds to the symmetry plane, equally distributed sets could also lead to TI symmetry.
In fact, as we prove in Appendix~\ref{sec:ap1}, $n\geq3$ sets are sufficient to obtain TI symmetry induced by aligned cracks---this agrees with Herman's theory.
However, each set must have an equal number of cracks of the same size; that is a necessary condition to be satisfied. 
On the other hand, two orthogonal symmetry planes lead to orthotropy. 
As indicated by~\citet{Sch97}, $n=2$ embedded sets induce such symmetry that supports our crack-set symmetry-plane analogy.
In the case of set-induced orthotropy, there is no requirement for an equal crack number or size in each set.
\begin{figure}[!htbp]
\centering
\begin{subfigure}{.4\textwidth}
  \centering
\includegraphics[scale=0.33]{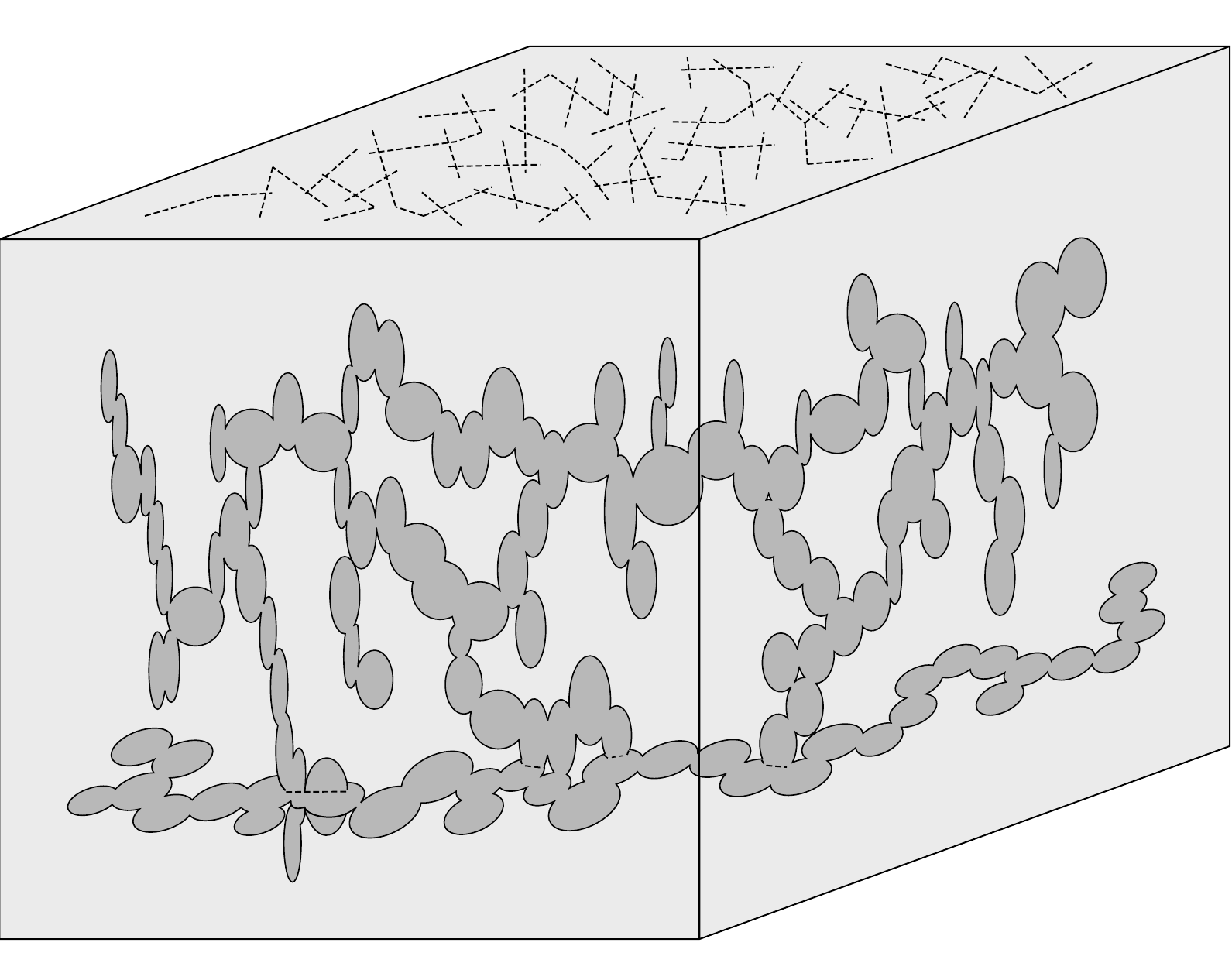}
\caption{\footnotesize{Classic CTI: connected case}}
\label{fig:foura}
\end{subfigure}
\qquad\qquad
\begin{subfigure}{.4\textwidth}
  \centering
\includegraphics[scale=0.33]{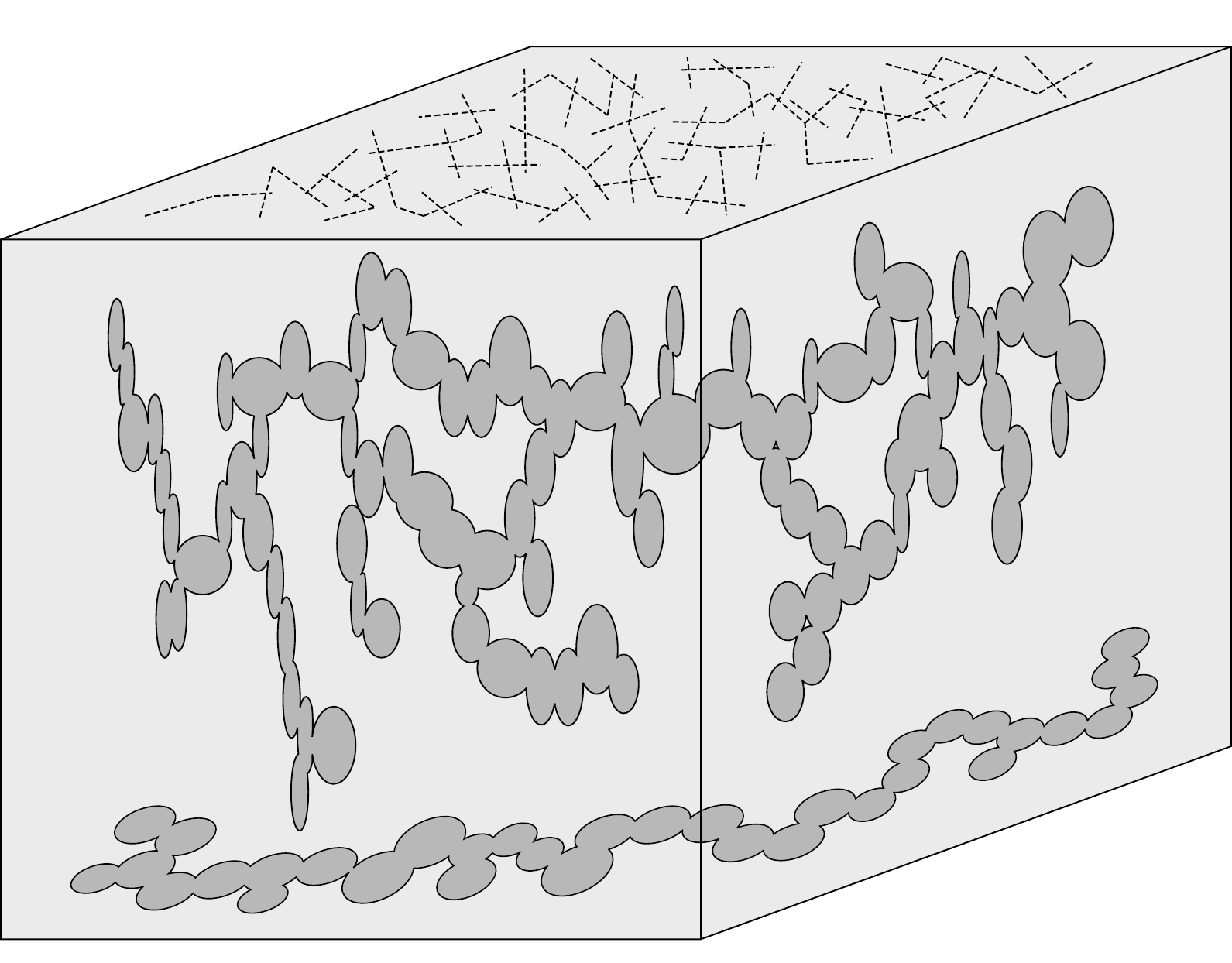}
\caption{\footnotesize{Classic CTI: isolated case}}
\label{fig:fourb}
\end{subfigure}
\qquad\qquad
\begin{subfigure}{.4\textwidth}
  \centering
\includegraphics[scale=0.33]{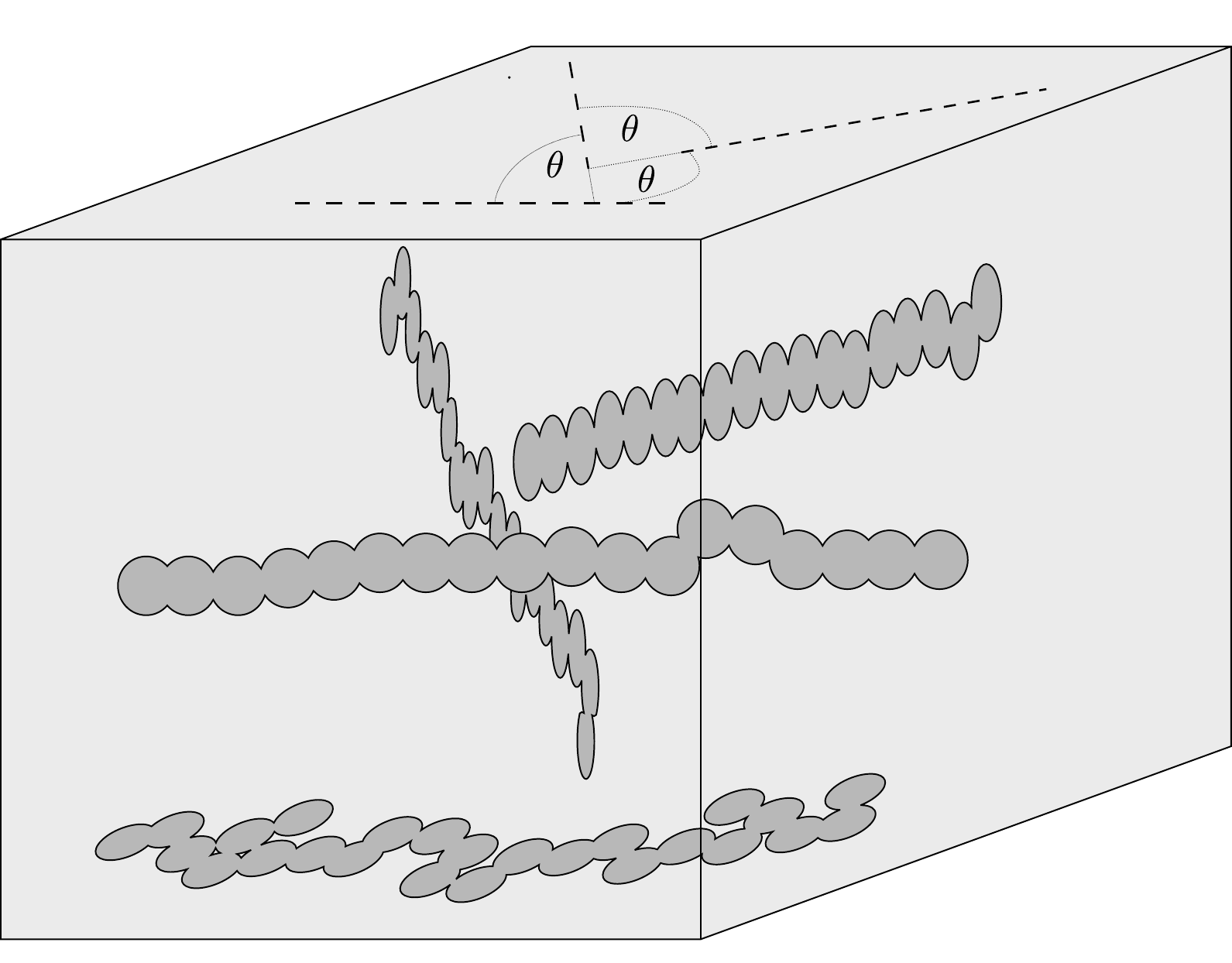}
\caption{\footnotesize{Set-induced CTI: isolated case}}
\label{fig:fourc}
\end{subfigure}
\qquad\qquad
\begin{subfigure}{.4\textwidth}
  \centering
\includegraphics[scale=0.33]{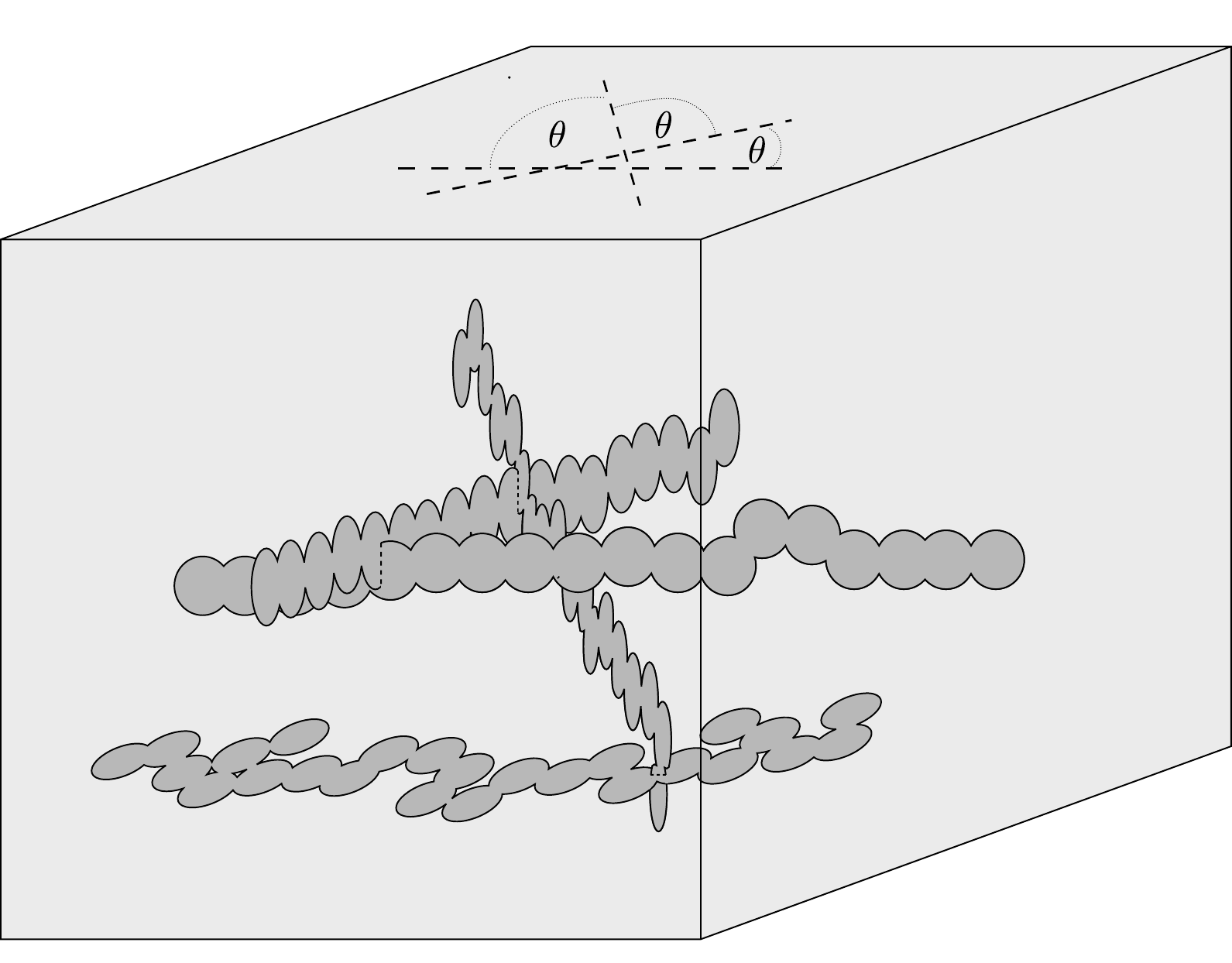}
\caption{\footnotesize{Set-induced CTI: connected case}}
\label{fig:fourd}
\end{subfigure}
\caption{\footnotesize{Some CTI scenarios of the poroelastic medium are illustrated. In each case, we include horizontal cracking that is optional; its absence does not affect the effective symmetry. First two figures represent classic cases of random transverse-isotropic orientations of vertical cracks. $\bm{a)}$ depicts interconnected vertical and horizontal cracks; they form one set only. $\bm{b)}$ presents an isolation of vertical and horizontal inhomogeneities; they form two distinct sets. 
The last two figures represent vertical cracks that are not random but aligned in $n=3$ directions.
The alignments are isolated horizontally by $\theta$ and are responsible for the CTI inducement. $\bm{c)}$ illustrates the scenario where all sets are detached ($4$ sets in total), whereas $\bm{d)}$ depicts the interconnected case (one set only). In the context of pore-impact approach or dry excess compliances, case $\bm{a)}$ and $\bm{c)}$ is equivalent to $\bm{b)}$ and $\bm{d)}$, respectively. The discrepancies arise if the set-impact approach is used, where connections do matter.}}
\label{fig:four}
\end{figure}

Let us provide an example to illustrate the general proof from Appendix~\ref{sec:ap1}.
Consider three sets of vertical and aligned cracks. Sets are isolated by $\theta=\pi/3$ and their surface normals $\bm{n}^{(p)}$ are
\begin{equation}
\bm{n}^{(1)}=[\cos (\theta),\,\sin (\theta),\,0]\,,\qquad \bm{n}^{(2)}=[\cos (2\theta),\,\sin (2\theta),\,0]\,,\qquad \bm{n}^{(3)}=[\cos (3\theta),\,\sin (3\theta),\,0]\,.
\end{equation}
Assume that sets are identical, meaning that the number of cracks and their shapes are the same in each set. 
Thus, we can state that
\begin{equation}
\sum_{c=1}^mZ_{N_c}=Z_N^{(p)}=Z_N\,
\end{equation}
and analogous description holds for $Z_{T_c}$.
Crack density tensors of the effective medium are
\begin{align}\label{denstens}
\alpha_{ij}&=\sum_{p=1}^n\alpha_{ij}^{(p)}=Z_T\sum_{p=1}^nn_i^{(p)}n_j^{(p)}\,,\\
\label{denstensb}
\beta_{ijk\ell}&=\sum_{p=1}^n\beta_{ijk\ell}^{(p)}=(Z_N-Z_T)\sum_{p=1}^nn_i^{(p)}n_j^{(p)}n_k^{(p)}n_\ell^{(p)}\,.
\end{align}
We get
\begin{align}
\alpha_{11}&=Z_T\left(\frac{1}{4}+\frac{1}{4}+1\right)=\frac{3Z_T}{2}\,,\\
\alpha_{22}&=Z_T\left(\frac{3}{4}+\frac{3}{4}+0\right)=\alpha_{11}\,,\\
\beta_{1111}&=(Z_N-Z_T)\left(\frac{1}{16}+\frac{1}{16}+1\right)=(Z_N-Z_T)\frac{9}{8}\,,\\
\beta_{2222}&=(Z_N-Z_T)\left(\frac{9}{16}+\frac{9}{16}+0\right)=\beta_{1111}\,,\\
\beta_{1122}&=(Z_N-Z_T)\left(\frac{3}{16}+\frac{3}{16}+0\right)=\frac{1}{3}\beta_{1111}\,,
\end{align}
which indicate CTI. 
Note that such symmetry would not appear if sets were not identical.
Also, it is apparent that $\alpha_{33}=\beta_{3333}=\beta_{1133}=0$ and
\begin{equation}\label{relationship}
\alpha^*_{11}=\frac{4}{3}\beta^*_{1111}\,,
\end{equation}
where
\begin{align}\label{alphabetaa}
\alpha_{ij}^*&:=\sum_{c=1}^mn_{i_c}n_{j_c}\,,\\
\beta_{ijk\ell}^*&:=\sum_{c=1}^mn_{i_c}n_{j_c}n_{k_c}n_{\ell_c}
\end{align}
can be defined as crack fabric tensors~\citep[sensu][]{Oda86}.
Relationship~(\ref{relationship}) is characteristic of any set-induced CTI, which can be proven as follows. 
For any $n\geq3$ vertical sets, relation $\beta_{1133}=\beta^*_{1133}=0$ must be true.
As a consequence, 
\begin{equation}\label{relationship2}
\alpha^*_{11}=\sum_{k=1}^3\beta^*_{11kk}=\beta^*_{1111}+\frac{1}{3}\beta^*_{1111}+0=\frac{4}{3}\beta^*_{1111}\,,
\end{equation}
as required.

Inserting~(\ref{denstens})--(\ref{denstensb}) into expression~(\ref{crack}), we get total excess compliances for a dry effective medium. 
Using condensed $6\times6$ matrix notation used by~\citet{Sch95} or~\citet{Kach18}, we obtain  
\begin{equation}
\sum_{p=1}^n\phi^{(p)}H_{ijk\ell}^{(p)}=
\frac{n}{8}\left[
\begin{array}{cccccc}
3Z_N+Z_T & Z_N-Z_T& 0 & 0 & 0 & 0 \\
Z_N-Z_T & 3Z_N+Z_T& 0 & 0 & 0 & 0 \\
0 & 0& 0 & 0 & 0 & 0 \\
0 & 0& 0 & 4Z_T & 0 & 0 \\
0 & 0& 0 & 0 & 4Z_T & 0 \\
0 & 0& 0 & 0 & 0 & Z_N+7Z_T \\
\end{array}
\right]\,,
\end{equation}
which is universal for $n\geq3$.

Interestingly, CTI can be also obtained if---additionally to the $n$ vertical sets---we insert a horizontal set. 
For ease of representation, assume that these horizontal cracks are identical, namely, $\sum_{c=1}^mZ_{N_c}=Z_{N_H}$.
Nevertheless, we allow them to differ from vertical ones.
We get additional excess compliances,
\begin{equation}
\phi^{(\rm{hor})}H_{ijk\ell}^{(\rm{hor})}=
\left[
\begin{array}{cccccc}
0 & 0 & 0 & 0 & 0 & 0 \\
0& 0& 0 & 0 & 0 & 0 \\
0 & 0& Z_{N_H} & 0 & 0 & 0 \\
0 & 0& 0 & Z_{T_H} & 0 & 0 \\
0 & 0& 0 & 0 & Z_{T_H} & 0 \\
0 & 0& 0 & 0 & 0 & 0 \\
\end{array}
\right]\,.
\end{equation}
The necessary TI relations are still satisfied. Additionally, condition $\beta_{1133}=0$ must be obeyed.

The procedure of obtaining the excess compliances responsible for the effect of fluid, $\Delta H_{ijk\ell}$, is analogous to the case of orthogonal cracking, discussed in the previous section. If $n$ sets are isolated, then pore-impact and set-impact approaches are equivalent due to identical cracks in each set. If sets are not isolated, then they are treated as interconnected subsets that form a single porosity; both micromechanical approaches must differ. Importantly, the possible connections between subsets do not influence the above dry-case derivations. Isolated and non-isolated scenarios are illustrated in Figures~\ref{fig:fourc}--\ref{fig:fourd}.

%We obtain dry and saturated excess compliances for each set, 

%that can be rewritten in an abbreviated form as
%\begin{equation}
%n^{(p)}=[\cos \left((p-1)\theta\right),\,\sin \left((p-1)\theta\right),\,0]\,.
%\end{equation}

%%%%%%%%%%%%%%%%%%%%%%%%%%%%%%%%%%%%%%%%%%
\subsubsection{Spheres}\label{sec:sphere}
%%%%%%%%%%%%%%%%%%%%%%%%%%%%%%%%%%%%%%%%%%
Let us now consider pores that have the shape of a sphere. Naturally, such a shape leads to isotropic dry excess compliance (the orientation of a pore does not influence $H_{ijk\ell}$).
In turn, 
\begin{equation}
{\bm{H}}_{_c}={\rm{const}}\implies \Delta {\bm{H}}_c={\rm{const}}\implies \Delta H_{ijk\ell_c}=\Delta H_{ijk\ell}^{(p)}\,.
\end{equation}
Even though orientations and shapes are identical, the sizes can vary, $\phi_c\neq{\rm{const}}$.
The fluid effect of pores is
\begin{equation}
\Delta_{I_{ijk\ell}}=\sum_{c=1}^x\phi_c \Delta H_{ijk\ell_c}=\Delta H_{ijk\ell}^{(p)}\sum_{c=1}^x\phi_c=\sum_{p=1}^{n<x}\phi^{(p)}\Delta H_{ijk\ell}^{(p)}=\Delta_{{II}_{ijk\ell}}\,.
\end{equation}
In other words, for any set of multiple spheres, both descriptions of a fluid impact are equivalent. 
If sizes and numbers of spheres forming distinct sets are different, then $\phi^{(p)}\neq{\rm{const}}\implies S^{(p)}\neq{\rm{const}}$. 
On the other hand, Skempton-like coefficients are identical for each set; they are explicitly impacted by ${\bm{H}}^{(p)}={\rm{const}}$ only.
%%%%%%%%%%%%%%%%%%%%%%%%%%%%%%%%%%%%%%%%%%
\subsubsection{Non-spheroidal shapes}\label{sec:supersphere}
%%%%%%%%%%%%%%%%%%%%%%%%%%%%%%%%%%%%%%%%%%
So far, we have described a few microstructures of spheroidal pores or cracks in the context of fluid effect and extended poroelasticity. However, as mentioned earlier, excess compliances can also be obtained for ellipsoids; therefore, the linkage between poroelasticity and effective methods is also allowable in the case of these more general shapes. 
Furthermore, such a linkage may exist for non-ellipsoidal shapes. As discussed by~\citet{Grechka06c}, excess compliances can be obtained for various---not necessarily penny shaped---cracks. In the case of non-flat pores, the approximations for non-ellipsoids are summarised in~\citet[Chapter 4.3,][]{Kach18}. Herein, we invoke the case of superspheres that may represent concave or convex pores. A surface of a supersphere of unit radius is described by $x_1^{(2k)}+x_2^{(2k)}+x_3^{(2k)}=1$, where parameter $k$ is a concavity factor. For $k<0.5$, the shape is concave, for $k>0.5$, it is convex, and for $k=1$ it is sphere. The excess compliances are approximately zero for $k<0.2$. If $k\in[0.2,\,1]$, we get
\begin{equation*}
H_{ijk\ell_c}\approx\frac{5k-1}{4}\frac{V_s}{V_1(k)}H_{ijk\ell_c}^s\,,
\end{equation*}
where $H^s_{ijk\ell_c}$ are the excess compliances of a sphere. $V_1$ and $V_s$ denote the volume of a supersphere~\citep[Expression 4.3.14,][]{Kach18} and a unit sphere, respectively. Subsequently, saturated compliances and fluid effects can be obtained.
For $k<1$, the excess compliance tensor is approximately isotropic, which means that both micromechanical approaches (pore-impact and set-impact) are approximately equal.
Similarly to the aspect ratio that is essential for ellipsoids, the concavity factor is crucial for superspheres---it has a strong effect on the pore contribution to the effective elasticity.
As shown by~\citet{Chen18}, the effect of both can be combined in the case of oblate or prolate superspheres. Note that the excess compliance tensor of an oblate or prolate supersphere is not approximately isotropic due to the effect of $\gamma$; hence, the two micromechanical approaches will generate different results.
%%%%%%%%%%%%%%%%%%%%%%%%%%%%%%%%%%%%%%%%%%
\section{Numerical simulations}
%%%%%%%%%%%%%%%%%%%%%%%%%%%%%%%%%%%%%%%%%%
In this section, we first show the differences between original and extended Biot theories applied to subsets of pores being connected or isolated, respectively. We utilise a scenario of identical pore shapes so that the choice of a micromechanical description does not influence the isolated case. Second, we focus on the discrepancies between pore-impact and set-impact approaches. To simplify the problem, we assume a single porosity only.  
In our simulations, we utilise the properties of Berea sandstone reported by~\citet{Beeler00} and~\citet{Wong17}, where for the solid phase $E=87\,\rm{GPa}$, $\nu=0.11$, and fluid compressibility $1/K_f=0.45\,\rm{GPa}^{-1}$. Additionally, in the second part, we compare the Berea sandstone with a much different solid phase, $E=125\,\rm{GPa}$ and $\nu=0.25$, typical for basalts~\citep{Ji10}. Results of our numerical experiments should not be treated as conclusive but indicative of certain repeatable phenomena.
%%%
\subsection{Original versus extended poroelasticity}\label{sec:numbiot}
%%%
Assume three vertical subsets of $m=100$ cracks each, having identical sizes and shapes, where $e_c=1/3m$ and $\gamma_c=0.01$. Let these subsets be equally distributed around the vertical axis. This way, CTI symmetry is induced. We assume a slight misalignment of subsets with the coordinate axes; crack azimuths with respect to $x_1$-axis are $\varphi^{(1)}=10^{\circ}$, $\varphi^{(2)}=130^{\circ}$, and $\varphi^{(3)}=250^{\circ}$. Knowing the stresses, the description of a porous medium behaviour can differ significantly depending on the connections between the subsets. 

\begin{figure}[!htbp]
\centering
\begin{subfigure}{.4\textwidth}
  \centering
\includegraphics[scale=0.44]{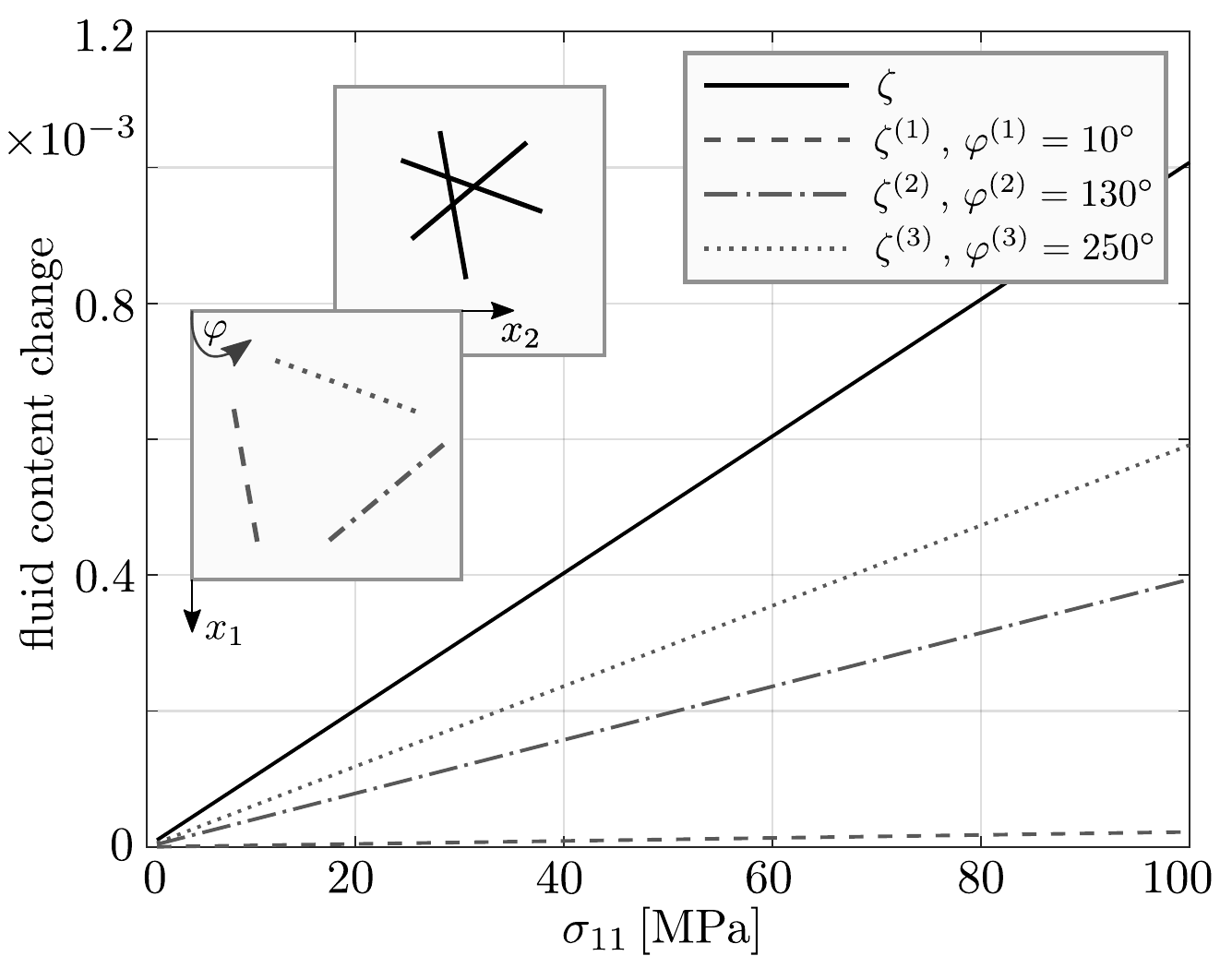}
\caption{\footnotesize{stress vs fluid content change}}
\label{fig6a}
\end{subfigure}
\qquad\qquad
\begin{subfigure}{.4\textwidth}
  \centering
\includegraphics[scale=0.44]{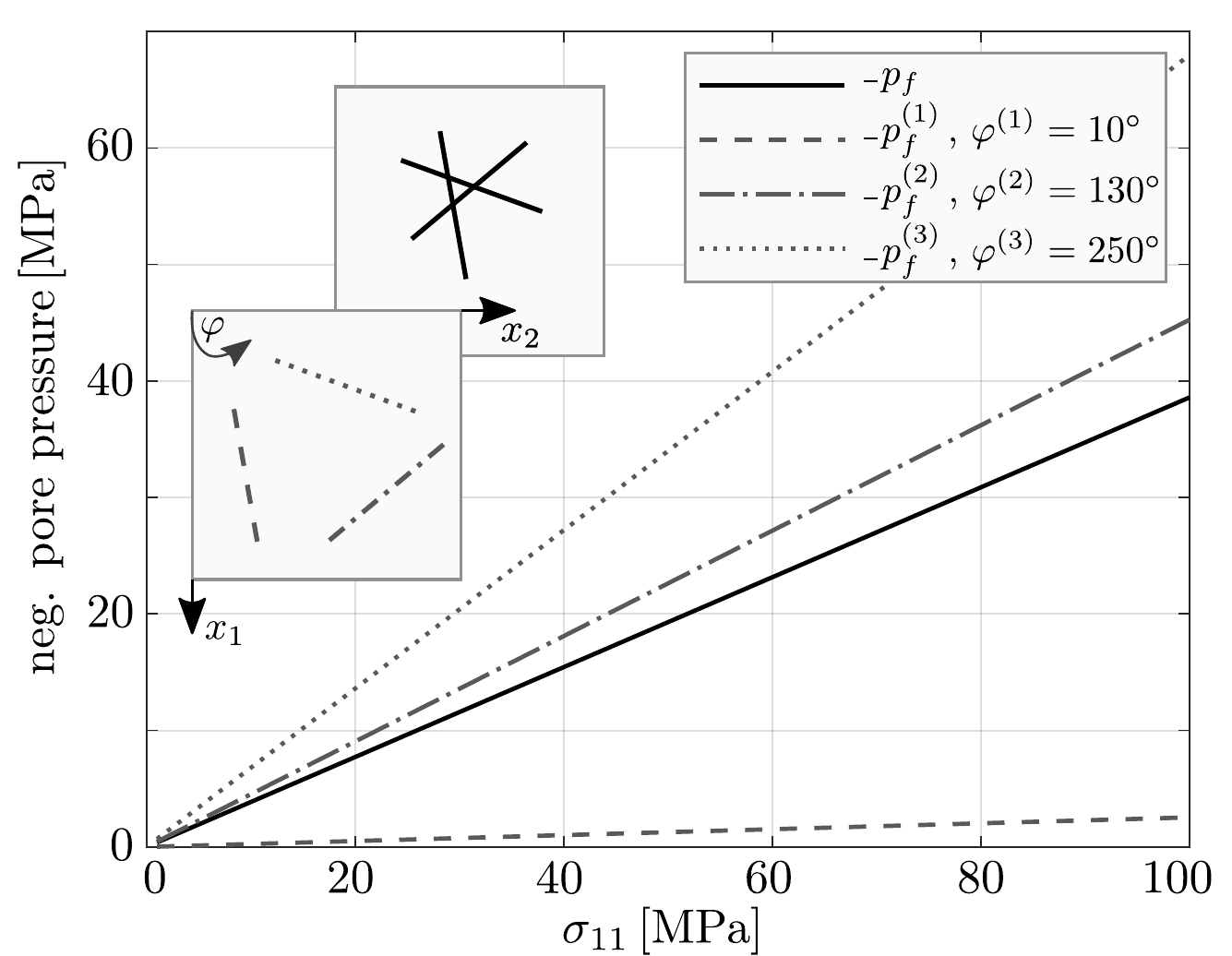}
\caption{\footnotesize{stress vs pressure}}
\label{fig6b}
\end{subfigure}
\caption{\footnotesize{Discrepancy between original and extended poroelasticity based on three subsets that induce CTI symmetry. If subsets are connected, then original Biot theory can be used (solid line). Isolated subsets require extended theory (non-solid lines). Schematic 2d views of the two cases are shown. Dashed line stands for vertical set with azimuth $\varphi^{(1)}=10^\circ$, dashed-dotted line indicates $\varphi^{(2)}=130^\circ$, and dotted line $\varphi^{(3)}=250^\circ$. We consider set-drained \textbf{(a)} and undrained \textbf{(b)} conditions.}}
\label{fig:six}
\end{figure}

If all subsets are connected, they form a single porosity and the original Biot theory is used. Using the set-impact approach, we obtain a single storage coefficient $S\approx0.0261\,\rm{GPa}^{-1}$ and non-zero Skempton components $B_{11}=B_{22}\approx1.16\,, B_{33}\approx0.01$. As a consequence, we can describe fluid content change $\zeta$ (drained case) or pore pressure $p_f$ (undrained case) as a function of stress, as shown by solid lines in Figures~\ref{fig6a}--\ref{fig6b}.

If all subsets are isolated, they form three distinct porosity clusters and the extended theory must be used. We obtain $S^{(1)}=S^{(2)}=S^{(3)}\approx0.0087\,\rm{GPa}^{-1}$ and
\begin{equation}
\bm{B}^{(1)}\approx
\left[
\begin{array}{ccc}
1.36 & -2.27 & 0 \\
-2.27& 0.96 & 0 \\
0 & 0 & 0.01 \\
\end{array}
\right]\,,\quad
\bm{B}^{(2)}\approx
\left[
\begin{array}{ccc}
2.04 & 1.48 & 0 \\
1.48 & 0.28 & 0 \\
0 & 0 & 0.01 \\
\end{array}
\right]\,,\quad
\bm{B}^{(3)}\approx
\left[
\begin{array}{ccc}
0.08 & 0.79 & 0 \\
0.79 & 2.24 & 0 \\
0 & 0 & 0.01 \\
\end{array}
\right]\,.
\end{equation}
As expected, in Figures~\ref{fig6a}--\ref{fig6b}, there are three (non-solid) lines depicting the relations of fluid content changes or pore pressure with the $\sigma_{11}$ stress component.
Due to the low aspect ratio, the values of these parameters depend strongly on the orientation of the subsets (the orientation would not matter in the case of spheres).  
The largest fluid content change and pressure occur if the long axes of the cracks are almost perpendicular to the stress direction. By analogy, the smallest values correspond to the case of the crack long axes being almost parallel to the stress. Therefore, in Figures~\ref{fig6a}--\ref{fig6b}, the subset $\varphi^{(3)}=250^{\circ}$ presents the largest values of $\zeta^{(p)}$ and $p_f^{(p)}$. Here, the crack axes are misaligned with $x_2$ by only $20^{\circ}$. The subset $\varphi^{(1)}=10^{\circ}$ has the lowest values of $\zeta^{(p)}$ and $p_f^{(p)}$. Here, the crack axes are misaligned with $x_1$ by only $10^{\circ}$. The remaining subset $\varphi^{(2)}=130^{\circ}$ presents moderate fluid content change and pore pressure due to relatively large misalignments with both $x_1$ and $x_2$ axes. Note that in the undrained conditions
\begin{equation}\label{storages}
S=\sum^n_{p=1}S^{(p)}
\end{equation}
and
\begin{equation}\label{skemptons}
B_{ij}=\overline{B_{ij}^{(p)}}\, \implies \, p_f=\overline{p_f^{(p)}}\,.
\end{equation}
As shown in Appendix~\ref{sec:ap}, equation~(\ref{storages}) holds in any circumstances. On the other hand,~(\ref{skemptons}) is obeyed due to $K_d^{(p)}$ being constant through sets that is usually not true. 
In our case, constant bulk moduli of pore sets are a consequence of identical pore shapes ($\gamma_c=\rm{const}$)---the requirement of induced CTI. 
Additionally, in drained conditions of this specific scenario, the constant volume fraction of each pore-set leads to $S^{(p)}=\rm{const}$ that implies $\zeta=\sum_{p=1}^n\zeta^{(p)}$ (see Appendix~\ref{sec:ap}); as depicted in Figure~\ref{fig6a}.
%%%
\subsection{Pore-impact versus set-impact approach}
%%%
To check the discrepancies between the two micromechanical descriptions, we performed multiple simulations for various numbers of $m$ pores forming a single set, where $m=3\times y$, $y\in[1, 1000]$, $y\in\mathbb{N}$. For each $m$, we quantify the aforementioned discrepancy as a relative difference ($R$) between pore-impact and set-impact approaches,
\begin{equation}
R^{f}_I:=\frac{||\Delta_I-\Delta_{II}||}{||\Delta_I||}\times100\%
\,,
\end{equation}
where $||\cdot||$ denotes a Frobenius norm and superscript $f$ stands for the effect of fluid.
We consider different shapes (aspect ratio $\gamma_c$), orientations, and sizes of pores (density $e_c=a^3_c/V$). Each characteristic can be identical ($i$), slightly varying ($s$), random ($r$), or can form a certain pattern ($p$) in the pore set. 
For instance, the case of pores having identical shapes, random orientations, and non-random sizes, would be denoted by $irp$, where the first letter always refers to shape, the second to orientation, and the third to pore size. Following the above-mentioned rule, we simulate nine possibilities, denoted as $iii$, $sss$, $rrr$, $rii$, $iri$, $iir$, $pii$, $ipi$, $iip$.
To obtain identical characteristics (e.g. shapes), we randomly choose (from a uniform distribution) the value of the first pore characteristic (e.g. $\gamma_1=0.1$) and then assign the same value to the rest of the pores. 
To get slightly varying shapes or sizes, we randomly choose the variations up to $10\%$ with respect to the first pore. To obtain slightly varying orientation, we simulate a random rotational axis, and we rotate this axis by angles that vary again up to $10\%$. To get random characteristic, we again use a uniform distribution to draw the characteristic for the first pore, and we repeat such a random simulation $m$ times. A pattern $p$ means that we randomly choose a characteristic, copy its values for $m/3$ pores, draw the value again for other $m/3$ pores, and finally draw the characteristic for the rest of the pores. All procedures described above must be looped for changing $m$. In other words, they are repeated $1000$ times till each possibility of $m$ is furnished. Having simulated certain shapes, $\gamma_c$, orientations, and densities, $e_c$, we use these values to calculate $H_{ijk\ell_c}$ and $\phi_c$. To do so, we follow the effective medium theory summarised by~\citet{Kach18}. Note that dry excess compliances are impacted by shape and orientation, whereas pore volume fraction is influenced by shape and size. Having $H_{ijk\ell_c}$ and $\phi_c$, we use equations from this paper to get $\Delta_I$ and $\Delta_{II}$ that lead to $R^{f}_I$. The goal of the simulations is to confirm the cases when the equivalence~(\ref{twomet}) or the approximation~(\ref{twomet4}) occurs and to show which characteristics affect $R_I^f$ the most.

To perform the tests, apart from the solid phase and fluid compressibility, we also need to define shape and size ranges. 
We choose two different ranges of pore shapes. Range $\gamma_c\in(0, 2)$ considers both oblate and prolate spheroids, whereas $\gamma_c\in(0, 0.2)$ corresponds to crack-like pores only.
We select $e_c\in(0, 3/m)$ so that the maximum volume fraction of a set
\begin{equation}
\phi^{(p)}=\frac{4\pi\gamma_cme_c}{3}
\end{equation}
is equal for any $m$ and can reach around $25\%$ (if $\gamma_c\approx2$). During the simulations, it occurred that the choice of the size range had negligible impact on $R_I^f$.
Let us discuss the results presented in Table~\ref{tab}. We notice that both random and patterned characteristics lead to significant discrepancies that can reach up to $88\%$ ($ppp$). In general, random pores generate a higher mean $R_I^f$, but the results do not vary as much as in the case of patterns. The comparison between $rrr$ and $ppp$ is also shown in Figure~\ref{fig:discr}. It is clear that the number of pores has a negligible impact on $ppp$ but significantly reduces oscillations of $R_I^f$ for $rrr$. Results for $sss$ support the approximation~(\ref{twomet4}). In other words, $R_I^f$ is very low if the pore microstructure varies up to $10\%$ in the set. Looking at $rii$, $iri$, and $iir$ (alternatively, $pii$, $ipi$, $iip$), we can evaluate the impact of each characteristic on $R_I^f$. For instance, $rii$ can indicate the influence of shape, since identical orientations and sizes have no contribution to the discrepancy.
We notice that shape or orientation has a significant impact on $R_I^f$, whereas the effect of size is negligible, as expected from the theoretical considerations (equivalence~(\ref{twomet})). Also, the results depend strongly on the choice of the $\gamma_c$ range, but little on the choice of the solid matrix.
%%%%%
%%%%%
\begin{table}[htb]
\caption{\footnotesize{Chosen scenarios of microstructure with corresponding $R_I^f$ (in $\%$). Generated pores with density $e_c=3/m$ are embedded in a Berea sandstone ($E=87\,\rm{GPa}\,,\,\nu=0.11$) or basalt ($E=125\,\rm{GPa}\,,\,\nu=0.25$). Mean and maximum discrepancies, along with standard deviations, are presented. }}
\label{tab}
\begin{tabular}
{cccccccccc}
& \multicolumn{6}{c}{Berea sandstone} & \multicolumn{3}{c}{Basalt}\\
\toprule
 & \multicolumn{3}{c}{$\gamma\in(0,2)$} & \multicolumn{3}{c}{$\gamma\in(0,0.2)$}  & \multicolumn{3}{c}{$\gamma\in(0,0.2)$}\\
\toprule
case \hphantom{X} & mean & max & sd \hphantom{X} & mean & max & sd \hphantom{X} & mean & max & sd \\
 \cmidrule{1-10}
rrr  \hphantom{X}& 9.71 & 25.76 & 1.77  \hphantom{X}& 49.13 & 70.48 & 1.89 \hphantom{X}&53.78&81.11&1.95 \\
\cmidrule{1-10}
ppp \hphantom{X}& 10.01 & 82.15 & 16.25 \hphantom{X}& 34.76 & 85.94 & 17.94 \hphantom{X}&38.64&88.75&19.72 \\
\cmidrule{1-10}
sss \hphantom{X}& 0.23 & 3.59 & 0.44 \hphantom{X}& 0.95 & 4.09 & 0.78 \hphantom{X}&1.03&4.01&0.83 \\
\cmidrule{1-10}
rii \hphantom{X}& 15.27 & 34.95 & 6.79 \hphantom{X}& 22.46 & 32.05 & 1.69 \hphantom{X}&29.17&50.44&1.69 \\
\cmidrule{1-10}
iri \hphantom{X}& 7.54 & 50.86 & 13.00 \hphantom{X}& 42.40 & 59.81 & 6.31 \hphantom{X}&44.62&58.01& 4.82 \\
\cmidrule{1-10}
iir \hphantom{X}& $<10^{-12}$ & $<10^{-12}$ & $<10^{-13}$ \hphantom{X}& $<10^{-12}$ & $<10^{-12}$ & $<10^{-13}$ \hphantom{X}&$<10^{-12}$ & $<10^{-12}$ & $<10^{-13}$ \\
\cmidrule{1-10}
pii \hphantom{X}& 10.89 & 80.98 & 16.81 \hphantom{X}& 15.37 & 48.99 & 12.10 \hphantom{X}&18.63&61.92&14.73\\
\cmidrule{1-10}
ipi \hphantom{X}& 5.31 & 75.69 & 11.15 \hphantom{X}& 30.31 & 73.00 & 16.51 \hphantom{X}&32.56&71.88&16.90\\
\cmidrule{1-10}
iip \hphantom{X}&  $<10^{-11}$ &  $<10^{-11}$ &  $<10^{-11}$ \hphantom{X}&  $<10^{-11}$ &  $<10^{-11}$ &  $<10^{-11}$ \hphantom{X}&$<10^{-11}$ &  $<10^{-11}$ &  $<10^{-11}$ \\
\bottomrule
\end{tabular}
\end{table}
%
%%%%%
%%%%%
%%%%%
    \begin{figure}[!htbp]
           \begin{floatrow}
             \ffigbox{\includegraphics[scale = 0.44]{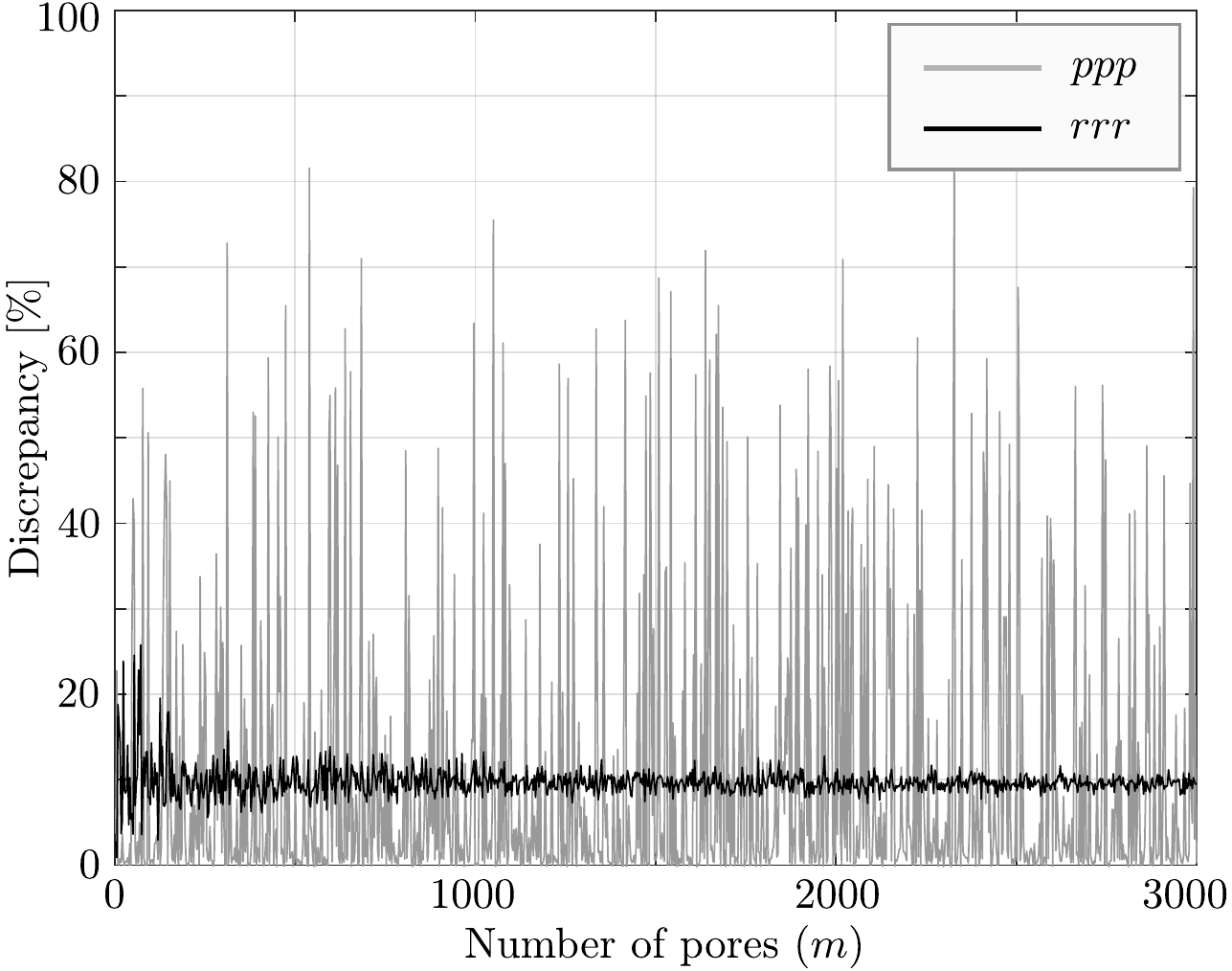}}{\caption{\footnotesize{Discrepancy between pore-impact and set-impact approaches based on multiple simulations for each $m$. Black colour corresponds to the scenario of random geometries ($rrr$), whereas grey indicates geometries with non-random pattern ($ppp$). Berea sandstone, $\gamma_c\in(0, 2)$, and $e_c=3/m$.}}\label{fig:discr}}
         \qquad  
 \ffigbox{\includegraphics[scale = 0.43]{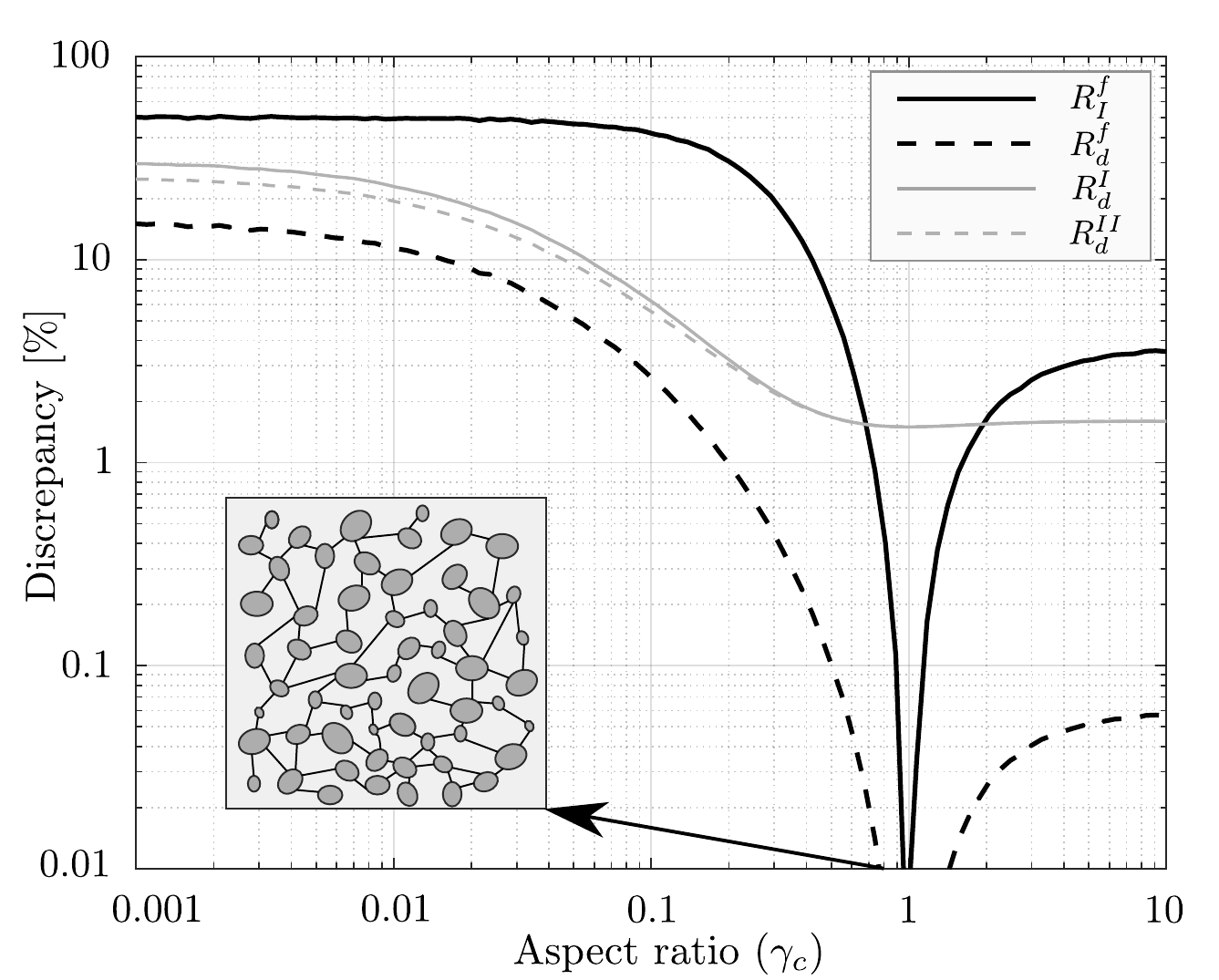}}{\caption{\footnotesize{Aspect ratio vs. discrepancies between various characteristics ($m=100$). Pore orientations and sizes are generated randomly ($irr$). Solid black denotes $R^{f}_I$, dashed black $R^{f}_{d}$, solid grey $R^{I}_{d}$, and dashed grey $R^{II}_{d}$. Berea sandstone and $e_c=3/m$.}\label{fig:gamma}}}
           \end{floatrow}
        \end{figure}
%%%%%
%%%%%

Although the relative discrepancy between fluid effects can be very high, it is good to relate it to the impact of dry pores. Perhaps, there are cases when $R_I^f$ is large but excess compliances of pores are negligible; then, the choice of micromechanical approach would not matter. Hence, we propose to also utilise 
\begin{equation}
R^{f}_d:=\frac{||\Delta_I-\Delta_{II}||}{||\sum_{c=1}^x\phi_cH_c||}\times 100\%
\,, \qquad
R^{I}_d:=\frac{||\Delta_I||}{||\sum_{c=1}^x\phi_cH_c||}\times 100\%
\,,
\qquad
R^{II}_{d}:=\frac{||\Delta_{II}||}{||\sum_{c=1}^x\phi_cH_c||}\times 100\%\,,
\end{equation}
where subscript $d$ denotes the effect of dry pores.
In Figure~\ref{fig:gamma}, we use these additional measures in the context of $\gamma_c$.
Therein, we simulate $irr$ for $m=100$ and show how a fixed aspect ratio affects the discrepancies.
Even tough $R_I^f>10\%$ for $\gamma_c$ up to $\sim0.4$, the choice of $\Delta_I$ or $\Delta_{II}$ is essential only for very low aspect ratios. Note that discrepancy $R_d^f>10\%$ occurs for $\gamma_c$ up to $\sim0.02$. Hence, the choice of fluid effect computation is essential only for cracks. In general, the fluid effect (either $\Delta_I$ or $\Delta_{II}$) is not so important to consider if aspect ratios are not low. $R_d^I>10\%$ or $R_d^{II}>10\%$ occurs for $\gamma_c$ up to $\sim0.04$ or $\sim0.05$, respectively. If we simulate $ipp$ instead of $irr$, the discrepancies occur to be even lower. As discussed in Section~\ref{sec:sphere}, $\gamma_c=1$ implies $\Delta_I=\Delta_{II}$; hence, $R_I^f$ and $R_d^f$ tend to zero for quasi-spherical pores. 

To sum up, our simulations indicate that $R^{f}_I$ depends mostly on the following (in descending order of importance).
\begin{enumerate}
\item{Shape and orientation---the effect is large, especially if the aspect ratio is small.}
\item{Number of pores---only important if shape and orientation are random.}
\item{Solid phase---little impact only.}
\item{Pore size---negligible effect, if any.}
\end{enumerate}
Also, it happens it is not essential to consider the fluid effect in the case of prolate pore shapes. In turn, in general, the choice of the micromechanical approach matters only for the case of cracks. 
%%%%%%%%%%%%%%%%%
\section{Discussion}
%%%%%%%%%%%%%%%%%
The extension to the anisotropic poroelasticity theory presented in this paper can be particularly useful in the case of isolated sets of cracks embedded in the porous solid. Although perfect isolation of the sets can be regarded as an idealisation, the scenario of weak connections between the sets seems not to be unlikely (see e.g., Figure~\ref{fig:pic}). It could lead to pore pressure variations viewed at a shorter time scale. Therefore, we predict that we will encounter the problem of varying pressure during laboratory experiments under undrained or quasi-undrained conditions. We plan to perform such experiments in the near future in order to test this.

Further, one may expect to utilise our extension at the large field scale, where quasi-static, low-frequency conditions are assumed. 
Such conditions are typical for deep reservoirs, seismic measurements or stress regimes in fault zones. The so-called dual-porosity extension (two isolated, isotropic pore-sets) was already considered in the context of flow patterns of fluids during reservoir pump down~\citep{Berryman02}. Following~\citet{Backus62}, the seismic wavelength is sufficiently long to consider thin layers as an effective anisotropic medium. By analogy, we may treat such material as having multiple pore sets isolated from each other but influencing the overall response. Similarly, assuming low-frequency stresses in the fault zone, distinct sets may lead to different fluid content changes.

The micromechanical description of the poroelastic medium can be viewed either holistically (set-impact) or individually (pore-impact). The former approach describes connected pores, where the sole connections have a negligible mechanical impact, but lead to uniform pressure within a set. The latter method assumes the absence of connections between pores that leads to varying pressure at the pore scale (pressure polarization). Effective elasticity computed using both approaches may differ, but such discrepancy matters only in the case of crack-like pores. The set-impact approach can be considered inconsistent with the EMT perspective (meso instead of micro-scale) but always provides the required poroelastic coefficients. We believe that it can be used freely if not largely inconsistent with the pore-impact approach. In the case it is inconsistent, and cracks are considered, we recommend caution; in such scenarios, the correctness of the set-impact approach ought to be verified in laboratory experiments. In the context of conventional triaxial tests indicating varying pressures, we recommend using the induced CTI symmetry that utilises the pore-impact approach.

If the poroelastic coefficients are related to microstructures---by assuming the undrained case---then we conjecture that these coefficients can be implemented in expressions~(1)--(2), where any (not necessarily undrained) scenario is considered. In other words, it is possible that micromechanics can be indirectly linked to intermediate poroelastic states, where time dependence plays a role. The comprehensive, time-dependent considerations beyond expressions~(1)--(2), where varying permeabilities and interset fluid flows are taken into account, can be found in our parallel paper~\citep{AdamusEtAl23a}.
%%%%%%%%%%%%%%%%%
\section{Conclusions}
%%%%%%%%%%%%%%%%%
We have proposed an extension of anisotropic poroelasticity theory, where we regard various scenarios of pore interconnections. 
The original approach of a single interconnected set of inhomogeneities is generalised to numerous sets (porosity clusters) that are isolated from each other.
Fluid content change can vary from set to set; when summed, giving the total fluid content change of the effective medium.
Each set is described by a distinct storage coefficient and Skempton-like tensor. Summed storages lead to a total storage coefficient, whereas Skempton-like coefficients should be regarded as the poroelastic characteristics of a particular set only.
Our idea of the theory extension originates from the concept of the so-called pore pressure polarisation.

Also, we invoke the classic micromechanical description of the fluid effect on pores, where each pore is treated separately (pore-impact approach).
We show that the pore-impact approach can be used successfully to obtain Skempton-like and storage coefficients, only if pores have identical shapes and orientations in a set.
If they have not, then we propose an alternative description (set-impact approach) that always leads to those coefficients. 
This way, a bridge between micromechanics (effective methods) and poroelasticity (extended Biot theory) is provided.
Although, in general, the classic micromechanical description cannot be implemented into Biot theory, it may be used as a fluid-effect reference ($\Delta_I$ vs. $\Delta_{II}$).
The discrepancies between the two approaches are shown numerically.
The choice of the micromechanical approach matters, especially in the case of cracks.
In our considerations, the non-interactive approximation is assumed. 
This can lead to errors in the case of high pore concentrations. 
Therefore, the expressions provided herein should be treated either as relevant for moderate pore densities only or as the basis for future investigations on the interactive cases.

Further, we have proved that TI excess compliance tensor (CTI symmetry) can be obtained by inserting $n\geq3$ vertical (axial) and aligned sets of cracks that are distributed equally around the symmetry axis.
Set-induced symmetry occurs to be a particular case of CTI, where $\beta_{1133}=0$ must be obeyed.
Optionally, the horizontal (radial) set of cracks can be inserted, which does not affect the symmetry conditions.
%%%%%%%%
\section*{Acknowledgements}
%%%%%%%%
This research was supported financially by the NERC grant: ``Quantifying the Anisotropy of Poroelasticity in Stressed Rock'', NE/N007826/1 and NE/T00780X/1.
%%%%%%%%%%%%%%%%%%%%%%%%%%%%%%%%%%%%%%%%%%
\bibliographystyle{apa}
\bibliography{LibraryUpperCase}
%\bibliography{library2}
%%%%%%%%%%%%%%%%%%%%%%%%%%%%%%%%%%%%%%%%%%
%%%%%%%%%%%%%%%%%%%%%%%%%%%%%%%%%%%%%%%%%%
%%%%%%%%%%%%%%%%%%%%%%
%%%%%%%%
\numberwithin{equation}{section}
\appendix
%\newpage
\section{List of symbols}\label{sec:list}
%%%%%%%%%%
\begin{table}[!htbp]
\scalebox{1}{
\begin{tabular}
{l l l c l l l}
\multicolumn{7}{l}{Greek letters}\\ 
[1ex]
\multicolumn{3}{l}{Scalars} & &               \multicolumn{3}{l}{Tensors}\\
[1ex]
$\gamma$&:=& aspect ratio                  & &                            $\alpha_{ij}$&:=& $2^{\rm{nd}}$ rank crack density tensor\\
$\delta$&:=& fluid factor                  & &                                 $\beta_{ijk\ell}$&:=& $4^{\rm{th}}$ rank crack density tensor\\
$\zeta$&:=& fluid content change                  & &                                    $\Delta_{ijk\ell}$&:=& fluid effect\\
$\theta$&:=& angle between vertical crack sets                 & &                         $\delta_{ij}$&:=& Kronecker delta\\
$\nu$&:=& Poisson ratio of a solid phase       & &         $\varepsilon_{ij}$&:=& strain tensor\\
$\phi$&:=& volume fraction             & &                   $\sigma_{ij}$&:=& stress tensor\\
$\varphi$&:=& azimuthal angle             & &                   \\
$\psi$&:=& non-zero angle             & &                   \\
[3ex] 
%\end{tabular}
%\end{table}
%
%\begin{table}[!htbp]
%\begin{tabular}
%{l l l c l l l}
\multicolumn{7}{l}{Roman letters}\\ 
[1ex]
\multicolumn{3}{l}{Scalars} & &               \multicolumn{3}{l}{Tensors}\\
[1ex]
$I$&:=& pore impact approach& &                                        $B_{ij}$&:=& Skempton tensor\\
$II$&:=& set impact approach& &                                         $H_{ijk\ell}$&:=&dry excess compliance tensor\\
$a$&:=& ratio of a circular crack       & &                               $\Delta H_{ijk\ell}$&:=& saturated compliance tensor\\
$c$&:=& closed or connected pore& &                                   $n_{i}$&:=& normal to crack surface\\
$d$&:=& dry pore& &                                                          $Q_{ij}$&:=& pressure polarisation tensor\\
$E$&:=& Young modulus of a solid phase& &                        $S_{ijkl}$&:=& compliance tensor of a porous skeleton\\
$e_c$&:=& density of a single crack& &                                 $S^0_{ijkl}$&:=& compliance tensor of a solid phase\\
$K_0$&:=& bulk modulus of a solid phase& &                          & & \\
$K_d$&:=& bulk modulus of a dry pore& &                              & & \\
$K_f$&:=& bulk modulus of a fluid phase& &                           & & \\
$m$&:=& number of pores in a set (or subset)& &                   & & \\
$n$&:=& number of sets (or subsets)& &                                & & \\
$p$&:=& particular set (or subset)& &                                    & & \\
$p_f$&:=& pore pressure& &                                                 & & \\
$R$&:=& relative discrepancy (error)& &                                 & & \\
$S$&:=& storage coefficient& &                                             & & \\
$u$&:=& undrained entity& &                                                & & \\
$V$&:=& medium's volume& &                  & & \\
$x$&:=& total number of pores in a medium & &                             & & \\
$x_i$&:=& coordinate axis & &                                              & & \\
$y$&:=& unknown or constant  & &                                        & & \\
$Z$&:=& crack excess compliance& &                                    & & \\
\end{tabular}
}
\end{table}
%%%%%%%%%%%
\section{Properties of poroelastic parameters}\label{sec:ap}
%%%%%%%%%%
Let us discuss some key properties of $S^{(p)}$, $B_{ij}^{(p)}$, $p_f^{(p)}$, and $\zeta^{(p)}$; poroelastic parameters describing distinct sets at a mesoscopic scale. Herein, we show how these parameters relate to their bulk counterparts describing a medium with single (instead of multiple) porosity, which is the original Biot's case.

In undrained conditions, the storage coefficient is not dependent on pore pressure---which must be constant in the set but may vary in the medium. 
Therefore, this coefficient does not have to be linked strictly with a pore set---as is the case of the Skempton-like tensor or obviously the aforementioned pressure---but can also be viewed at other, single-pore or bulk-medium scales. In other words, the storage coefficient is a scalar independent of the connection among pores. Thus, we can use definition~(\ref{def:storage}) along with~(\ref{Kc}) to express a total storage coefficient, $S_{tot}$, as
\begin{equation}\label{a1}
S_{tot}=\sum_{p=1}^nS^{(p)}=\sum_{p=1}^n\phi^{(p)}K_d^{{(p)}^{-1}}+\sum_{p=1}^n\phi^{(p)}\left(K_f^{-1}-K_0^{-1}\right)=\sum_{p=1}^n\left(\phi^{(p)}\sum_{i=1}^3\sum_{j=1}^3H_{iijj}^{{(p)}}\right)+\sum_{p=1}^n\phi^{(p)}\left(K_f^{-1}-K_0^{-1}\right)\,.
\end{equation}
Let us refer to expression~(\ref{def:Hp0}) to analogously define the relationship between dry excess compliances of multiple sets and of the entire effective medium, $H_{ijk\ell}$. We can write
\begin{equation}\label{newdef}
\phi_{tot}H_{ijk\ell}=\sum_{p=1}^n\phi^{(p)}H_{ijk\ell}^{(p)}\,.
\end{equation}
Using the above, we can rewrite~(\ref{a1}) as
\begin{equation}
S_{tot}=\phi_{tot}\left(\sum_{i=1}^3\sum_{j=1}^3H_{iijj}+K_f^{-1}-K_0^{-1}\right)=\phi_{tot}\left(K_d^{-1}+K_f^{-1}-K_0^{-1}\right)=:S\,.
\end{equation}
We see that $S_{tot}$ is equivalent to the storage coefficient of a medium with one interconnected porosity, $S$.

In contrast, the Skempton-like tensor and pore pressure must be associated with pore connections. 
They are linked to each set and must be regarded as that; therefore, a total value of the above-mentioned parameters is not introduced.
Nevertheless, one may again seek a comparison between a bulk Skempton tensor (or pressure) calculated for a medium with one interconnected porosity and set Skemptons (or pressures) obtained for the same medium but with detached sets.
Bulk, interconnected Skempton can be expressed as
\begin{equation}
B_{ij}:=\dfrac{3\sum_{k=1}^3H_{ijkk}}{K_d^{{-1}}+K_f^{-1}-K_0^{-1}}=\dfrac{3\phi_{tot}\sum_{k=1}^3H_{ijkk}}{S}=\dfrac{3\sum_{p=1}^n\left(\phi^{(p)}\sum_{k=1}^3H_{ijkk}^{(p)}\right)}{\sum_{p=1}^nS^{(p)}}\,,
\end{equation}
where we utilised relationship~(\ref{newdef}). If $K_d^{(p)}$ is constant through sets, then $B_{ij}$ reduces to
\begin{equation}
B_{ij}=\dfrac{3\sum_{p=1}^n\left(\phi^{(p)}\sum_{k=1}^3H_{ijkk}^{(p)}\right)}{\phi_{tot}\left(K_d^{{(p)}^{-1}}+K_f^{-1}-K_0^{-1}\right)}=\frac{\sum_{p=1}^n\phi^{(p)}B_{ij}^{(p)}}{\phi_{tot}}=\overline{B_{ij}^{(p)}}\,,
\end{equation}
where bar denotes an average weighted by the volume fraction of each pore set.
Due to relation~(\ref{three}), the same conclusions regard interconnected pore pressure $p_f$ and set pressures $p_f^{(p)}$---we obtain $p_f=\overline{p_{f}^{(p)}}$ if $K_d^{(p)}=\rm{const}$. 

In drained conditions, total fluid content change (additive scalar) can be written as
\begin{equation}
\zeta_{tot}=\sum_{p=1}^n\zeta^{(p)}=\frac{1}{3}\sum_{p=1}^n\left(S^{(p)}\sum_{k=1}^3\sum_{\ell=1}^3B^{(p)}_{k\ell}\sigma_{k\ell}\right)=
\frac{n}{3}\sum_{k=1}^3\sum_{\ell=1}^3{\overline{S^{(p)}B^{(p)}_{k\ell}}}\sigma_{k\ell}\,.
\end{equation}
Herein, the bar indicates an arithmetic average. If $S^{(p)}$ is constant through sets---that is equivalent to constant both $\phi^{(p)}$ and $K_d^{(p)}$---we get
\begin{equation}
\zeta_{tot}=\frac{n}{3}S^{(p)}\sum_{k=1}^3\sum_{\ell=1}^3\overline{B^{(p)}_{k\ell}}\sigma_{k\ell}
=\frac{1}{3}S\sum_{k=1}^3\sum_{\ell=1}^3B_{k\ell}\sigma_{k\ell}
=:\zeta
\end{equation}
that is a fluid content change of a single interconnected porosity.

Note that $K_d^{(p)}$ depends on the pore shape only.
Using definitions~(\ref{Kc}) and~(\ref{def:Hp}), we can write 
\begin{equation}
K_d^{(p)}=\left(\sum^3_{i=1}\sum^3_{j=1}\frac{\sum_{c=1}^m\phi_cH_{iijj_c}}{\sum_{c=1}^m\phi_c}\right)^{-1}\,,
\end{equation}
By dry excess compliance definition~\citep{Kach18}, each component $H_{ijk\ell_c}$ depends on aspect ratio and pore orientation. However, due to the component summation, the tensor (and the pore) orientation does not matter. Thus, from set to set
\begin{equation}
K_d^{(p)}=\rm{const}\implies \frac{\sum_{c=1}^m\phi_c\gamma_c}{\sum_{c=1}^m\phi_c}=\rm{const}\,.
\end{equation}
To conclude, $B_{ij}$ (or $p_f$) is weighted average of $B_{ij}^{(p)}$ (or $p_f^{(p)}$) and $\zeta=\zeta_{tot}$ if each set has the same composition of pore shapes. In a typical geological scenario, pore shapes vary from set to set ($K_d^{(p)}\neq\rm{const}$) that leads to $S=S_{tot}$, $p_f\neq\overline{p_{f}^{(p)}}$, $B_{ij}\neq\overline{B_{ij}^{(p)}}$, and $\zeta\neq\zeta_{tot}$.

%%%%%%%%
\section{Proof of CTI induction for $n\geq3$ vertical sets}\label{sec:ap1}
%%%%%%%%%%
\begin{theorem}\label{thm}
Consider $n\geq3$ $(n\in\mathbb{N})$ identical sets of aligned and dry circular cracks that are embedded in the isotropic solid phase. If each crack set is vertical and isolated from the other by horizontal angle $\theta=\pi/n$, then transverse-isotropy (TI) with a vertical symmetry axis is induced.  
\end{theorem}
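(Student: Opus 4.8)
The plan is to reduce the claim to a handful of purely trigonometric power-sum identities over equally spaced angles, and to show that the hypothesis $n\geq3$ is exactly what makes the relevant harmonics cancel. First I would parametrise the $p$-th set by its surface normal $\mathbf{n}^{(p)}=(\cos\phi_p,\sin\phi_p,0)$ with $\phi_p=p\theta=p\pi/n$ for $p=1,\dots,n$, exactly as in the worked example. Because every crack is vertical, $n_3^{(p)}=0$, so every component of the crack density tensors~(\ref{denstens})--(\ref{denstensb}) carrying an index $3$ vanishes identically; this collapses the problem to the in-plane ($x_1$--$x_2$) block. Since the sets are identical, $\alpha_{ij}=Z_T\,\alpha^*_{ij}$ and $\beta_{ijk\ell}=(Z_N-Z_T)\,\beta^*_{ijk\ell}$, so it suffices to verify the transverse-isotropy relations for the crack fabric tensors~(\ref{alphabetaa}) alone.

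Next I would list the relations characterising TI with a vertical axis for these in-plane tensors: $\alpha^*_{11}=\alpha^*_{22}$ and $\alpha^*_{12}=0$ for the second-rank part, together with $\beta^*_{1111}=\beta^*_{2222}$, $\beta^*_{1122}=\tfrac13\beta^*_{1111}$, and the vanishing of the shear-coupling terms $\beta^*_{1112}=\beta^*_{1222}=0$ for the fourth-rank part; these complete the CTI conditions invoked in Section~\ref{sec:CTI}. Each such component is a power sum $\sum_{p}\cos^{a}\phi_p\,\sin^{b}\phi_p$ with $a+b\in\{2,4\}$. Using the standard multiple-angle expansions ($\cos^2\phi=\tfrac12(1+\cos2\phi)$, $\cos^4\phi=\tfrac18(3+4\cos2\phi+\cos4\phi)$, and their companions) I would rewrite each of them as a constant multiple of $n$ plus a linear combination of the harmonic sums $\sum_p\cos k\phi_p$ and $\sum_p\sin k\phi_p$ for $k\in\{2,4\}$.

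The heart of the argument is then the evaluation of these harmonic sums through the geometric series $\sum_{p=1}^n e^{\mathrm{i}k\phi_p}=\sum_{p=1}^n\bigl(e^{\mathrm{i}k\pi/n}\bigr)^{p}$. For even $k$ one has $\bigl(e^{\mathrm{i}k\pi/n}\bigr)^{n}=e^{\mathrm{i}k\pi}=1$, so whenever $e^{\mathrm{i}k\pi/n}\neq1$ the numerator of the series vanishes and the whole sum vanishes; this occurs exactly when $k$ is not a multiple of $2n$. For $n\geq3$ both $k=2$ and $k=4$ satisfy $0<k<2n$, so all four harmonic sums are zero and I obtain $\alpha^*_{11}=\alpha^*_{22}=n/2$, $\alpha^*_{12}=0$, $\beta^*_{1111}=\beta^*_{2222}=3n/8$, $\beta^*_{1122}=n/8$, and $\beta^*_{1112}=\beta^*_{1222}=0$. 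These are precisely the required relations (note $n/8=\tfrac13\cdot3n/8$), so the induced excess compliance~(\ref{crack}), and hence the effective medium, is transversely isotropic about the vertical axis.

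I expect the main obstacle to be the bookkeeping at the threshold rather than any deep difficulty: one must verify carefully that $k=4$ does not coincide with $2n$. This is exactly what fails at $n=2$, where $k=4=2n$ gives $e^{\mathrm{i}4\pi/2}=1$, every summand equals $1$, and $\sum_p\cos4\phi_p=n\neq0$; the relation $\beta^*_{1122}=\tfrac13\beta^*_{1111}$ then breaks and only orthotropy survives, consistent with the $n=2$ remark attributed to~\citet{Sch97}. As an independent sanity check one could instead invoke Herman's theorem directly: since cracks are unoriented, rotation of the configuration by $\theta=\pi/n=2\pi/(2n)$ about the vertical axis permutes the identical sets and is therefore a symmetry of the effective tensor, and $2n\geq6\geq5$ for $n\geq3$ already guarantees full transverse isotropy---but I prefer the explicit power-sum computation because it exhibits the sharp $n\geq3$ threshold transparently.
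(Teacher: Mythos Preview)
Your proposal is correct and follows essentially the same route as the paper: reduce the CTI conditions on the crack density tensors to trigonometric power sums over the equally spaced angles $\phi_p=p\pi/n$, and show that the relevant second and fourth harmonics vanish precisely when $n\geq3$. The only differences are cosmetic---you evaluate the harmonic sums via the complex geometric series $\sum_p e^{\mathrm{i}k\phi_p}$ where the paper invokes Lagrange's trigonometric identity, and you additionally verify the vanishing of the off-diagonal terms $\alpha^*_{12},\beta^*_{1112},\beta^*_{1222}$, which the paper leaves implicit; both are mild improvements in cleanliness and completeness rather than a different strategy.
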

\begin{proof}
Without loss of generality, assume that $x_3$ is a vertical axis. 
Vertical cracks in each set are aligned, meaning that their surface normals are equal to the normal of the $p$ set,
\begin{equation}
\bm{n}^{(p)}=[\cos \left(p\theta\right),\,\sin \left(p\theta\right),\,0]\,.
\end{equation}
A TI excess compliance matrix, $\bm{H}$, is a sufficient condition for the effective medium to become TI.
An excess compliance matrix is TI with a vertical symmetry axis if components of crack density tensors
\begin{equation}\label{alphabeta2}
\alpha_{11}=\alpha_{22}\,,\qquad\beta_{1111}=\beta_{2222}\,,\qquad\beta_{1111}=3\beta_{1122}\,.
\end{equation}
The above conditions define a particular case of TI symmetry, the so-called cylindrical transverse-isotropy (CTI).
If each set is identical, meaning that the number, sizes, and shapes of cracks are the same in each set, we can rewrite~(\ref{alphabeta})--(\ref{alphabetab}) as
\begin{align}
\alpha_{ij}&:=\sum_{c=1}^mZ_{T_c}n_{i_c}n_{j_c}=Z_T\sum_{p=1}^nn_i^{(p)}n_j^{(p)}\,,\\
\beta_{ijk\ell}&:=\sum_{c=1}^m\left(Z_{N_c}-Z_{T_c}\right)n_{i_c}n_{j_c}n_{k_c}n_{\ell_c}=(Z_N-Z_T)\sum_{p=1}^nn_i^{(p)}n_j^{(p)}n_k^{(p)}n_\ell^{(p)}\,.
\end{align}
Thus, conditions~(\ref{alphabeta2}) correspond to
\begin{equation}\label{cos_conditions}
\sum_{p=1}^n\cos^2(p\theta)=\sum_{p=1}^n\sin^2(p\theta)\,,\qquad\sum_{p=1}^n\cos^4(p\theta)=\sum_{p=1}^n\sin^4(p\theta)\,,\qquad\sum_{p=1}^n\cos^4(p\theta)=3\sum_{p=1}^n\cos^2(p\theta)\sin^2(p\theta)\,.
\end{equation}
To prove set-induced CTI symmetry, we need to show that conditions~(\ref{cos_conditions}) are satisfied.
\begin{lemma}
If $n$ sets from Theorem~\ref{thm} are embedded in the solid phase, then conditions~(\ref{cos_conditions}) are satisfied.  
\end{lemma}
\begin{proof}
Lagrange trigonometric identity states that
\begin{equation}
\sum_{p=1}^{n}\cos(p\psi)=-\frac{1}{2}+\frac{\sin\left[\left(n+\frac{1}{2}\right)\psi\right]}{2\sin\left(\frac{\psi}{2}\right)}\,
\end{equation}
for $\psi\neq2k\pi$ ($k=0,1,2,\dots$).
Therefore, if $\psi=2\theta=2\pi/n$, we obtain
\begin{equation}\label{2theta}
\sum_{p=1}^{n}\cos(2p\theta)=0\,.
\end{equation}
If $\psi=4\theta=4\pi/n$, we get
\begin{equation}\label{4theta}
\sum_{p=1}^{n}\cos(4p\theta)=0\,
\end{equation}
that does not hold for $n=2$.
We can rewrite relation~(\ref{2theta}) as
\begin{equation}
\begin{aligned}
\sum_{p=1}^{n}\cos^2(p\theta)&=\frac{n}{2}\,
\\
\sum_{p=1}^{n}\cos^2(p\theta)&=n-\sum_{p=1}^{n}\cos^2(p\theta)\,
\\
\sum_{p=1}^{n}\cos^2(p\theta)&=\sum_{p=1}^{n}\left[1-\cos^2(p\theta)\right]\,
\\
\sum_{p=1}^{n}\cos^2(p\theta)&=\sum_{p=1}^{n}\sin^2(p\theta)\,
\end{aligned}
\end{equation}
or as
\begin{equation}
\begin{aligned}
\sum_{p=1}^{n}\cos^4(p\theta)&=n-2\sum_{p=1}^{n}\cos^2(p\theta)+\sum_{p=1}^{n}\cos^4(p\theta)\,
\\
\sum_{p=1}^{n}\cos^4(p\theta)&=\sum_{p=1}^{n}\left[1-\cos^2(p\theta)\right]^2\,
\\
\sum_{p=1}^{n}\cos^4(p\theta)&=\sum_{p=1}^{n}\sin^4(p\theta)\,.
\end{aligned}
\end{equation}
Assume that the last condition is satisfied, namely,
\begin{equation}\label{condition3}
\sum_{p=1}^n\cos^4(p\theta)=3\sum_{p=1}^n\cos^2(p\theta)\sin^2(p\theta)\,
\end{equation}
that can be rewritten as
\begin{equation}
\begin{aligned}
4\sum_{p=1}^{n}\cos^4(p\theta)&=3\sum_{p=1}^{n}\cos^2(p\theta)
\\
n+2\sum_{p=1}^{n}\cos(2p\theta)+\sum_{p=1}^{n}\cos^2(2p\theta)&=3\sum_{p=1}^{n}\cos^2(p\theta)\,
\\
\sum_{p=1}^{n}\cos(2p\theta)+\sum_{p=1}^{n}\cos(4p\theta)&=0\,.
\end{aligned}
\end{equation}
The assumption~(\ref{condition3}) must be correct for $n\geq3$ due to relations~(\ref{2theta}) and~(\ref{4theta}).
\end{proof}
\end{proof}
%%%%%%%%
%%%%%%%%%%%%%%%%%%%%%%%%%%%%%%%
\end{document}